\newtheorem{theorem}{\textbf{Theorem}}
\newtheorem{lemma}{\textbf{Lemma}}
\newtheorem{definition}{\textbf{Definition}}
\newtheorem{corollary}{\textbf{Corollary}}
\newcommand{\defn}{\triangleq}
\newcommand{\dif}{\textmd{d}}
\begin{document}

\title{Ergodic Transmission Capacity of Wireless Ad Hoc Networks with Interference Management}

\author{Chun-Hung Liu and Jeffrey G. Andrews
\thanks{C.-H. Liu and J. G. Andrews are with the Dept. of Electrical and Computer Engineering, the University of Texas at Austin, Austin TX 78712-0204, USA. The contact author is J. G. Andrews (Email: jandrews@ece.utexas.edu). \today.}}

\maketitle

\begin{abstract}
Most work on wireless network throughput ignores the temporal correlation inherent to wireless channels because it degrades tractability. To better model and quantify the temporal variations of wireless network throughput, this paper introduces a metric termed ergodic transmission capacity (ETC), which includes spatial and temporal ergodicity. All transmitters in the network form a homogeneous Poisson point process and all channels are modeled by a finite state Markov chain. The bounds on outage probability and ETC are characterized, and their scaling behaviors for a sparse and dense network are discussed. From these results, we show that the ETC can be characterized by the inner product of the channel-state related vector and the invariant probability vector of the Markov chain. This indicates that channel-aware opportunistic transmission does not always increase ETC. Finally, we look at outage probability with interference management from a stochastic geometry point of view. The improved bounds on outage probability and ETC due to interference management are characterized and they provide some useful insights on how to effectively manage interference in sparse and dense networks.
\end{abstract}


\section{Introduction}

In the past decade, our understanding of large wireless network capacity has increased considerably, but perhaps still comprises more questions than answers, especially for realistic models. Gupta and Kumar's landmark work \cite{PGPRK00}, for example, introduced the transport capacity metric and derived scaling laws on it in a size-limited network. Another more recent example is transmission capacity proposed in \cite{SWXYJGAGDV05} which is a spatial throughput metric for Poisson-distributed transmitters in an infinite network with outage constraints. Almost all of the studies following the aforementioned approaches did not consider temporal affections. For a wireless network with long-term time-varying channels, its \emph{snapshot} throughput may not provide a full picture of how the throughput evolves over time.

In this paper, we introduce a metric capable of characterizing the network throughput induced by channels with temporal and spatial ergodicity.  This metric is called the \emph{ergodic transmission capacity} (ETC), and it measures the maximum long-term average rate (in bps/Hz) that can be sent per unit area in the network with an outage constraint. We evaluate the ETC under assuming all transmitters form a homogeneous Poisson point process (PPP) with a unique receiver. The channel models span many blocks of time, and so the throughput variations over time can be characterized with our framework but not with prior frameworks. Thus, ETC may be better able to accurately suggest how to effectively use transmissions over time and space, such as multi-antenna transmission and opportunistic scheduling, and how much such techniques will improve area spectral efficiency in a long-term sense.


\subsection{Motivation and Related Work}
In the literature on wireless network throughput (see \cite{PGPRK00,RJLCBDSJDCHILRM01,SRKPV04,RNAR04,SWXYJGAGDV05,FXPRK06,SWJGANJ07} and the references therein), a unified and time-invariant channel model is typically adopted over the entire network, but channels in a large-scale wireless network usually are diverse across time and/or space. Using a channel model without temporal correlation does not capture how channel states evolve over time and thus the impact on network throughput from the temporal (and spatial) variations of channels is ignored, and techniques which exploit the variations and correlations cannot be properly quantified. For example, we observe that transmission techniques that increase the snapshot network throughput may not increase ETC or may even degrade it. In particular, we will show that channel-aware opportunistic transmission (CAOT), i.e. transmitting when channels are in good states, does not always improve ETC, which is perhaps surprising.

ETC requires the use of different channel models that include temporal discrepancies. We propose a finite-state Markov chain (FSMC) to model the fading channels, in particular a $m$-state Markov chain that is irreducible and positive recurrent.  Each channel undergoes path loss and fading and has an ergodic property in that its fading state has an invariant (steady-state) probability \cite{SMRLT09}. This idea can be traced back to the early work of Gilbert and Elliott \cite{ENG60}\cite{EOE63} that used a two-state Markov chain to represent \emph{good} and \emph{bad} channel conditions which was extended to a finite state case in \cite{HSWNM95}.

\subsection{Contributions}

The first contribution in this paper is the model for ETC itself. We then calculate the ETC, which requires the outage probability for each fading channel state to be found, for which we find tight closed-form bounds based on a proposed $\delta$-level interfering coverage area around a receiver. Any single transmitter in the $\delta$-level region of its unintended receiver will cause an outage if its interference power is enlarged by a factor $\delta$. We show that appropriately choosing $\delta$ admits bounds on outage probability that are much tighter than those found in previous work.

Bounds on ETC and their corresponding scaling laws for some special cases are then found. They reveal several interesting implications, e.g. the scaling of ETC for a \emph{sparse} or \emph{dense} network is
\begin{equation}\label{Eqn:ScalingLawENC}
C_{\texttt{E}}=\Theta(\Phi^{\textsf{T}}\mathbf{s}_{\epsilon}),
\end{equation}
where $C_{\texttt{E}}$ denotes ETC, $\mathbf{s}_{\epsilon}=\left[\tilde{s}^{\frac{2}{\alpha}}_1, \tilde{s}^{\frac{2}{\alpha}}_2,\cdots,\tilde{s}^{\frac{2}{\alpha}}_m\right]^{\textsf{T}}$ in which $\tilde{s}_k$ is the function of the $k$th state $s_k$ of a Markov fading channel with $m$ states, $\Phi=[\phi_1\,\,\phi_2\, \cdots\,\,\phi_m]^{\textsf{T}}\in[0,1]^{m}$ is the invariant probability vector\footnote{The physical meaning of $\phi_k$ is the fraction of time that channel state $s_k$ sojourns as time goes to infinity.} of the Markov fading channel model and $\alpha>2$ is the path loss exponent. From \eqref{Eqn:ScalingLawENC}, we notice that a single deep fading state is not necessary to have a significant negative effect on ETC if its invariant probability is very small. This point is not revealed in prior work that neglects temporal variations. Also, we observe that ETC has a geometric interpretation because it can be viewed as an inner product of two vectors. Thus, ETC is maximized when the directions of $\Phi_k$ and $\mathbf{s}_{\epsilon}$ coincide. In addition, we show that channel-aware opportunisitc transmission (CAOT), which is a scheme to allow transmitters to transmit only when their channels are in good states, does not necessarily provide an ETC gain. Although CAOT is able to increase the transmission capacity contributed by good channel states, it loses the transmission capacity contributed by bad channel states. So CAOT cannot benefit ETC if the improvement is no larger than the loss from bad channel states.

Three interference management methods -- avoidance, suppression and cancellation -- are applied to the network to reduce outage probability. Bounds on the outage probability and ETC with interference management are found which provide geometric insight into the efficiency of each technique in different scenarios. For example, we show that interference cancellation is not effective for significantly increasing ETC in a spare or dense network. Also, we show interference management can control the direction and magnitude of vector $\mathbf{s}_{\epsilon}$. Finally, we show that CAOT should not be used if interference management can significantly lower interference.

\section{Network Model and Definitions}\label{Sec:NetModDefns}

\subsection{The Network Model}
The wireless network considered in this paper is of infinite size and all nodes in the network are independently and randomly scattered.
Thus, we employ a marked homogeneous Poisson point process (PPP) $\Pi$ on the plane $\mathbb{R}^2$ to represent the locations of all transmitting nodes in the network, which can be written at time $\tau$ as\footnote{Here it is better to use $\Pi(\tau)$ instead of $\Pi$. However, to simplify notation, we will use $\Pi$ to stand for $\Pi(\tau)$ throughout this paper if ignoring time indices does not induce any ambiguity. This custom is applied to other set and variable notations.}
\begin{equation}
\Pi\defn\left\{(X_i\in\mathbb{R}^2,H_i(\tau)\in\mathbb{R}_+,e_i(\tau)): e_i(\tau)=1\right\},
\end{equation}
where $X_i$ denotes node $X_i$ as well as its location, $H_i(\tau)$ is the fading channel gain between node $X_i$ and its receiver $Y_i$, and $e_i(\tau)\in\{0,1\}$ represents the transmitting index of node $X_i$: $e_i(\tau)=1$ means node $X_i$ is transmitting; otherwise, it is idle. The intensity (density) of $\Pi$ is $\lambda$ for all $\tau\in\mathbb{N}$.

Each transmitter has a unique receiver and the distance between a TX-RX pair is a constant $d>1$. All of the transmitters use the same transmit power and the channel model between each TX-RX pair is subject to path loss and fading. So the channel gain for TX-RX pair $i$ can be written as
$$H_i(\tau)\ell(|X_i-Y_i|)=H_i(\tau)\ell(d),$$
where $H_i(\tau)$ is the fading channel gain, $|X_i-Y_i|$ denotes the Euclidean distance between nodes $X_i$ and $Y_i$ and $\ell(|\cdot|)$ is the path loss function. In order to avoid the singularity where $|X\rightarrow 0$, we will use
\begin{equation}
\ell(|X|)=|X|^{-\alpha} \mathds{1}(|X|\in[1,\infty)),\quad X\in\mathbb{R}^2,
\end{equation}
where $\alpha>2$ is the path loss exponent\footnote{In a planar network, we require $\alpha>2$ to have bounded interference, i.e. $I_t<\infty$ almost surely if $\alpha>2$ \cite{FBBBPM06,MHJGAFBODMF10}.} and $\mathds{1}(x\in\mathcal{X})$ denotes the \emph{indicator} function: $\mathds{1}(x\in\mathcal{X})=1$ if $x\in\mathcal{X}$ and 0, otherwise.

Specifically, we use an $m$-state FSMC model to characterize the fading effect of all channels in the network. The FSMC is irreducible and positive recurrent, and its $m$ states are ordered. The FSMC model with transition matrix $\mathbf{P}$ is denoted by  $\mathcal{S}(\mathbf{P})\in\mathbb{R}_+^{m}$ and $\mathcal{S}$ is an order set of the $m$ states, i.e. for any two states $s_i, s_j\in \mathcal{S}$ we have $s_i< s_j$ where $i<j$ and $i,j\in\mathcal{M}\defn\{1,2,\cdots,m\}$. Since $\mathcal{S}$ is irreducible and positive recurrent, the fading channel gain $H(\tau)$ for all TX-RX pairs must satisfy the following conditions\cite{SMRLT09}:
\begin{eqnarray}
\phi_k \defn \lim_{L\rightarrow\infty} \frac{1}{L} \sum_{\tau=0}^{L-1} \mathds{1}(H(\tau)\in s_k)\,\,\text{ and }\,\,\mathbf{u}^{\intercal}\Phi = \sum_{k=1}^m \phi_k = 1,
\end{eqnarray}
where $\Phi\defn [\phi_1,\phi_2,\cdots,\phi_m]^{\textsf{T}}$ is the invariant probability vector of $\mathbf{P}$ and $\mathbf{u}\defn[1\,\,1\,\cdots \,1]^{\textsf{T}}$ is an $m$-tuple vector. Namely, at any time $\tau$, $H(\tau)$ must belong to one of the states in $\mathcal{S}$ and $\phi_k$ represents the probability that $H(\tau)$ visits state $s_k$ in a long-term sense. We can show this  $m$-state Markov channel model has a temporal ergodic property as stated in the following lemma.
\begin{lemma}[\textbf{Temporal Ergodicity of an $m$-state Markov Chain}]\label{Lem:TemErgMarkovChain}
Suppose $\mathcal{S}(\mathbf{P})$ is an irreducible and positive recurrent Markov chain with $m$ states, and its transition matrix is $\mathbf{P}$. Let $\hbar: \mathcal{S}\rightarrow [0,1]$ be a state measurable function of $\mathcal{S}$ and $Z(\tau)$ is a random variable taking values in $\mathcal{S}$. Thus, we have
\begin{equation}
\mathbb{P}\left[\lim_{L\rightarrow\infty}\frac{1}{L}\sum_{\tau=0}^{L-1}\hbar(Z(\tau)\in\mathcal{S})=\sum_{i=1}^m \phi_k\, \hbar(s_k)\right]=1,
\end{equation}
where $\{\phi_k, k=1,\cdots,m\}$ are the invariant (steady state) distribution of $\mathbf{P}$ and $s_k$ is the $k$th state of $\mathcal{S}$.
\end{lemma}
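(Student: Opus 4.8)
The plan is to recognize this statement as the strong law of large numbers (equivalently, the pointwise ergodic theorem) for an irreducible finite-state Markov chain, and to prove it by a regenerative cycle decomposition. Since the state space is finite and $\hbar$ takes values in $[0,1]$, every integrability condition is trivially satisfied, so the only real content is the almost-sure convergence of the time average to the space average $\sum_{k=1}^m \phi_k\,\hbar(s_k)$.

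First I would fix a reference state, say $s_1$, and invoke irreducibility together with positive recurrence to guarantee that $\{Z(\tau)\}$ returns to $s_1$ infinitely often almost surely with finite mean return time. Let $0\le T_0<T_1<T_2<\cdots$ denote the successive visit times to $s_1$. By the strong Markov property the excursions between consecutive returns are i.i.d., so I would introduce the per-cycle reward $W_j\defn\sum_{\tau=T_{j-1}}^{T_j-1}\hbar(Z(\tau))$ and the cycle length $L_j\defn T_j-T_{j-1}$, whereupon the pairs $\{(W_j,L_j)\}_{j\ge 1}$ form an i.i.d. sequence with finite means.

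Second, I would apply the classical i.i.d. SLLN to the cycle rewards and cycle lengths, and sandwich the partial average $\frac{1}{L}\sum_{\tau=0}^{L-1}\hbar(Z(\tau))$ between two consecutive cycle-completed averages. Writing $N_L$ for the number of completed cycles by time $L$, the renewal-reward argument yields
\begin{equation}
\lim_{L\to\infty}\frac{1}{L}\sum_{\tau=0}^{L-1}\hbar(Z(\tau))=\frac{\mathbb{E}[W_1]}{\mathbb{E}[L_1]}\quad\text{a.s.},
\end{equation}
once the incomplete initial and terminal partial cycles are shown to be negligible after division by $L$; this is immediate because $\hbar\in[0,1]$ and $\mathbb{E}[L_1]<\infty$.

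Finally I would identify the limiting constant. Using the standard identity relating the invariant distribution to the mean return time, $\mathbb{E}[L_1]=1/\phi_1$, and decomposing the expected reward over one cycle according to the expected number of visits to each state, $\mathbb{E}[W_1]=\sum_{k=1}^m\hbar(s_k)\,\mathbb{E}\big[\sum_{\tau=T_0}^{T_1-1}\mathds{1}(Z(\tau)=s_k)\big]=\frac{1}{\phi_1}\sum_{k=1}^m\phi_k\,\hbar(s_k)$, so the ratio cancels $\phi_1$ and produces $\sum_{k=1}^m\phi_k\,\hbar(s_k)$ almost surely, as claimed. The main obstacle is purely bookkeeping: controlling the boundary (incomplete-cycle) contributions and correctly matching the expected per-cycle visit counts with the invariant probabilities $\phi_k$; both are standard consequences of positive recurrence and the strong Markov property, and the boundedness of $\hbar$ removes any integrability concern. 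Specializing to $\hbar=\mathds{1}(\,\cdot\,=s_k)$ recovers the defining relation for $\phi_k$ used in the network model, confirming internal consistency.
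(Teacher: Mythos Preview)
Your proposal is correct, but it proceeds along a different route from the paper's own argument. The paper does not invoke a regenerative cycle decomposition or the renewal--reward theorem; instead it writes the time average as $\frac{1}{L}\sum_{\tau=0}^{L-1}\hbar(Z(\tau))=\sum_{i}\frac{V_i(L)}{L}\,\hbar(s_i)$ with $V_i(L)=\sum_{\tau=0}^{L-1}\mathds{1}_{s_i}(Z(\tau))$, takes as known the a.s.\ convergence $V_i(L)/L\to\phi_i$, and then controls the difference $\big|\sum_i (V_i(L)/L-\phi_i)\hbar(s_i)\big|$ by splitting the index set into a finite part $\mathcal{J}$ and its complement. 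In effect the paper assumes the indicator case and extends it to bounded $\hbar$ by a truncation/approximation argument (a device that is really needed only for countable state spaces; for finite $m$ one could simply take $\mathcal{J}=\mathcal{S}$). Your regenerative proof is more self-contained---it derives the result directly from the i.i.d.\ SLLN and the strong Markov property without presupposing the special case $\hbar=\mathds{1}(\cdot=s_k)$---whereas the paper's argument is shorter precisely because it leans on that special case. Both are standard and valid; yours has the advantage of making explicit where positive recurrence enters (finite mean return time), while the paper's approach would generalize more readily to a countable state space once the underlying empirical-frequency convergence is in hand.
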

\begin{proof}
See Appendix \ref{App:ProofTemErgMarkovChain}.
\end{proof}

The definition of ergodic transmission capacity in the following subsection is built based on the result in Lemma \ref{Lem:TemErgMarkovChain}. In addition, the following lemma shows that the fading channel model of a FSMC also has a spatial ergodic property.

\begin{lemma}[\textbf{Spatial Ergodicity}]\label{Lem:SpaErgMarkovChain}
Consider a marked homogeneous PPP $\Pi$ with an independent mark $H(\tau)\in\mathcal{S}$, and let $g: \Pi\rightarrow \mathbb{R}_+$ be a measurable function on $\Pi$. For any bounded subset $\mathcal{A}_n\subset\mathbb{R}^2$ and $\mu(\mathcal{A}_n)\rightarrow\infty$ as $n\rightarrow\infty$, we have
\begin{equation}
\mathbb{E}[g(\Pi_k)]\defn\lim_{n,\tau\rightarrow\infty}\frac{1}{\mu(\mathcal{A}_n)}\int_{\mathcal{A}_n} g((X,H\in s_k))\mu(\dif X)=\phi_k\mathbb{E}[g(\Pi)],\,\,a.s.  ,
\end{equation}
where $\Pi_k\defn\{(X_i,H_i)\in\Pi: H_i\in s_k\}$ is the PPP with channel state $s_k$.
\end{lemma}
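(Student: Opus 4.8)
The plan is to recognize this as a spatial ergodic theorem for a stationary, ergodic marked point process, combined with an independent-marking (thinning) argument that produces the factor $\phi_k$. Concretely, I would reduce the claim to two ingredients: (i) the almost-sure convergence of spatial translation averages of a measurable functional on a homogeneous PPP to its ensemble mean, and (ii) the fact that restricting to points whose mark lies in state $s_k$ is an independent thinning whose retention probability tends to $\phi_k$ as $\tau\to\infty$.

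First I would establish the underlying ergodicity. A homogeneous PPP on $\mathbb{R}^2$ is stationary and mixing, hence ergodic, under the group of spatial translations $\{\theta_X : X\in\mathbb{R}^2\}$, and attaching independent marks $H_i(\tau)\in\mathcal{S}$ preserves stationarity and ergodicity of the marked process $\Pi$. I would then invoke the multiparameter (spatial) ergodic theorem for such processes: for a nonnegative measurable functional $g$ and an averaging sequence $\mathcal{A}_n$ with $\mu(\mathcal{A}_n)\to\infty$, the spatial average $\frac{1}{\mu(\mathcal{A}_n)}\int_{\mathcal{A}_n} g(\theta_X\Pi)\,\mu(\dif X)$ converges almost surely to $\mathbb{E}[g(\Pi)]$. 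This is the continuous analogue of Birkhoff's theorem and plays the same role here that Lemma~\ref{Lem:TemErgMarkovChain} plays in the temporal direction.

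Next I would handle the mark decomposition. Writing $\Pi=\bigcup_{k\in\mathcal{M}}\Pi_k$ according to the state of each point's mark, $\Pi_k$ is obtained from $\Pi$ by independent thinning with retention probability $\mathbb{P}[H(\tau)\in s_k]$. Because the marks evolve as the irreducible, positive-recurrent chain $\mathcal{S}(\mathbf{P})$, this retention probability converges to the invariant value $\phi_k$ as $\tau\to\infty$ (this is exactly the steady-state statement underlying Lemma~\ref{Lem:TemErgMarkovChain}), so $\Pi_k$ is asymptotically a homogeneous PPP of intensity $\lambda\phi_k$. For the additive (Campbell-type) functionals that arise when accumulating interference, independence of the marks then lets me factor the indicator $\mathds{1}(H_i\in s_k)$ out in expectation, yielding $\mathbb{E}[g(\Pi_k)]=\phi_k\,\mathbb{E}[g(\Pi)]$; applying the spatial ergodic theorem of the previous step to $\Pi_k$ rather than $\Pi$ turns this ensemble identity into the claimed almost-sure limit.

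The main obstacle is the joint limit $\lim_{n,\tau\to\infty}$. The spatial ergodic theorem is stated for a fixed mark law, whereas here the law of the marks is itself varying in $\tau$ until the chain settles to its invariant distribution, so I must justify interchanging (or jointly taking) the spatial and temporal limits. I would address this by noting that once the chain mixes the marked process becomes time-stationary, and control the pre-stationary regime through a uniform-integrability or dominated-convergence bound on $g$. A secondary technical point is that $\mu(\mathcal{A}_n)\to\infty$ alone does not suffice for the almost-sure ergodic theorem, so I would take $\mathcal{A}_n$ to be a convex (van Hove) averaging sequence increasing to $\mathbb{R}^2$, ensuring that boundary contributions are asymptotically negligible.
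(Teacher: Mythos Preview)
Your proposal is correct and follows essentially the same route as the paper: invoke spatial ergodicity of the homogeneous marked PPP, recognize $\Pi_k$ as an independent thinning of $\Pi$, factor the indicator $\mathds{1}(H\in s_k)$ out of the expectation by independence of the marks, and then apply Lemma~\ref{Lem:TemErgMarkovChain} to identify the limiting retention probability as $\phi_k$. The paper's argument is in fact terser and more heuristic than yours---it does not discuss the joint $n,\tau\to\infty$ limit or the regularity of the averaging sets that you rightly flag---so your version is, if anything, more careful about the same underlying idea.
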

\begin{proof}
Since $\Pi$ is homogeneous, we know
$$\mathbb{E}[g(\Pi)]=\lim_{n,\tau\rightarrow\infty}\frac{1}{\mu(\mathcal{A}_n)}\int_{\mathcal{A}_n}\mathbb{E}_{H}[g((X,H))]\mu(\dif X).$$
Since $\{H_i\}$ are independent, $\Pi_k$ is just the thinning homogeneous PPP of $\Pi$ and thus we have
\begin{eqnarray*}
\mathbb{E}[g(\Pi_k)] &=& \lim_{n,\tau\rightarrow\infty}\frac{1}{\mu(\mathcal{A}_n)}\int_{\mathcal{A}_n} g((X,H\in s_k)) \mu(\dif X)\\
  &=& \lim_{n,\tau\rightarrow\infty}\frac{1}{\mu(\mathcal{A}_n)}\int_{\mathcal{A}_n} \mathbb{E}[g((X,H))]\mathbb{E}[\mathds{1}(H\in s_k)] \mu(\dif X)\stackrel{(\star)}{=} \phi_k \mathbb{E}[g(\Pi)],
\end{eqnarray*}
where $(\star)$ follows from the temporal ergodicity result in Lemma \ref{Lem:TemErgMarkovChain}.
\end{proof}
Lemma \ref{Lem:SpaErgMarkovChain} indicates that the spatial average of $g(\Pi_k)$ is equal to $\phi_k\mathbb{E}[g(\Pi)]$. So we know the intensity of $\Pi_k$ is $\lambda_k=\phi_k\lambda$ provided that $g(\cdot)$ is an intensity measure.

The interference channel gain from transmitter $X_j$ to its non-intended receiver $Y_i$ is denoted by $\tilde{H}_{ji}(\tau)\,\ell(|X_j-Y_i|)$ where $ \tilde{H}_{ji}(\tau)\in\mathcal{S}$. The aggregate interference normalized by the transmit power at receiver $Y_i$ can thus be expressed as
\begin{equation}\label{Eqn:Interference}
I_i(\tau) = \sum_{X_j\in\Pi\backslash X_i}  \tilde{H}_{ji}(\tau)\ell(|X_j-Y_i|),
\end{equation}
where $I_i$ is also called a spatial shot noise process \cite{SBLMCT90,JIDH98,ESSJAS90,FBBBPM06} since it captures the cumulative effect at location $Y_i$ of a set of random shocks appearing at random locations $X_j$, and $\tilde{H}_{ji}\ell(|X_j-Y_i|)$ can be viewed as the impulse function that gives the attenuation of the transmit power in space. In order to have a successful transmission for TX-RX pair $i$, the following signal-to-interference ratio (SIR) condition at receiver node $Y_i$ must hold at time $\tau$:
\begin{equation}\label{Eqn:SIR}
\mathrm{SIR}_i(\tau,\lambda) \defn \frac{H_i(\tau)}{d^{\alpha}I_i(\tau)}\geq \beta,
\end{equation}
where $\beta$ is the SIR threshold for TX-RX pair $i$ to successfully decode the received data. The network is assumed to be interference-limited.

Note that according to Slivnyak's theorem\cite{DSWKJM96} the statistics of $I_i$ seen by any node in the network is the same if the nodes form a homogeneous PPP. That means the average outage probability of each receiver node may be found by evaluating the SIR seen by a receiver located at the origin. Intuitively, the distribution of the point process is unaffected by the addition of a receiver at the origin, and this receiver is called a \emph{typical} receiver. The performance measured at the origin is often referred to the Palm measure, and in keeping with simplified notation we will denote the probability and expectation of functionals of evaluated at the origin o by $\mathbb{P}$ and $\mathbb{E}$, respectively. Also, Table \ref{Tab:MathNotation} summaries the main mathematical notation used in this paper.

\subsection{Definitions}
Consider the typical TX-RX pair and its steady state outage probability is
\begin{equation}\label{Eqn:OutageProb}
\lim_{\tau\rightarrow\infty}\mathbb{P}[\mathrm{SIR}(\tau,\lambda)<\beta] \in \{q_k(\lambda), k\in\mathcal{M} \},
\end{equation}
where $q_k(\lambda)\defn \lim_{\tau\rightarrow\infty}\mathbb{P}[\mathrm{SIR}(\tau,\lambda)<\beta|H(\tau)=s_k]$ is the outage probability for channel state $s_k$ as time $\tau$ goes to infinity. Now we are ready to use \eqref{Eqn:OutageProb} to define ergodic transmission capacity in this paper.

\begin{definition}[\textbf{Ergodic Transmission Capacity}]\label{Def:ErgTraCap}
Suppose transmitting nodes in a wireless ad hoc network form a homogeneous PPP of intensity $\lambda$. For a given $\epsilon\in(0,1)$, the ergodic transmission capacity (ETC) of a wireless ad hoc network is defined by
\begin{equation}\label{Eqn:DefTransCap}
C_{\texttt{E}} \defn b\,\bar{\lambda}_{\texttt{E}}(1-\epsilon),
\end{equation}
where $b$ is the supportable transmission rate, $\epsilon$ is the upper bound on the outage probability of each channel state, and $\bar{\lambda}_{\epsilon}=\sup\{\lambda>0: \sum_{k=1}^m\phi_k q_k(\lambda)\leq \epsilon\}$ is called maximum contention intensity achieved under the outage probability constraint $\epsilon$.
\end{definition}
The definition of $C_{\texttt{E}}$ in \eqref{Eqn:DefTransCap} originates from the following definition:
\begin{equation}\label{Eqn:OriDefEgrNetCap}
C_{\texttt{E}} \defn \frac{b\,\bar{\lambda}_{\texttt{E}}}{L}\sum_{\tau=0}^{L-1} \mathbb{P}[\mathrm{SIR}(\tau,\bar{\lambda}_{\epsilon})\geq\beta].
\end{equation}
Since all channels are an irreducible and positive current Markov chain, according to Lemma \ref{Lem:TemErgMarkovChain} they all have temporal ergodicity. Thus, the definition in \eqref{Eqn:OriDefEgrNetCap} is equivalent to
$C_{\texttt{E}} = b\,\bar{\lambda}_{\texttt{E}} \sum_{k=1}^m \phi_k[1-q_k(\bar{\lambda}_{\texttt{E}})]$.
This is the reason why we directly use the invariant probability of a Markov chain to define ETC instead of using \eqref{Eqn:OriDefEgrNetCap}. For ease of analysis, we need to quantitatively define the sparseness and denseness of a network with Poisson-distributed nodes.
\begin{definition}[\textbf{Spatial Sparseness and Denseness of a Poisson-Distributed Network}]\label{Def:NetworkStatus}
Suppose the transmission coverage of a transmitter is the circular area of radius $d$. A network whose transmitting nodes form a homogeneous PPP of intensity $\lambda$ is called ``dense'' (``sparse'') if the average number of transmitting nodes in the coverage is sufficiently large (small), i.e. $\pi d^2\lambda\gg 1$ ($\pi d^2\ll 1$).
\end{definition}

\section{General Results on Ergodic Transmission Capacity}\label{Sec:ErgNetCap}

In this section, we study the general results of ETC. First, we have to calculate the outage probability for each channel state; however, only the bounds on the outage probability and ETC can be characterized due to the complicated distribution of the interference. According to the found bounds, the scaling behaviors of ETC are characterized and they reveal several observations.

\subsection{Bounds on the Outage Probability}\label{Subsec:BoundOutProb}

Since a closed-form expression of the outage probability defined in \eqref{Eqn:OutageProb} is difficult to find\footnote{If all channels are instead Rayleigh fading, the closed-form of the outage probability can be found by the Laplace transform of the aggregate interference contributed by Poisson-distributed transmitters\cite{FBBBPM06}\cite{JFCK93}. However, such closed-form outage probability cannot be obtained for the case of channels without fading \cite{SWXYJGAGDV05}\cite{SWJGAXYGDV07} or with a single state at any time.}, we resort to bounds. The idea of approaching the lower bound for a receiver with channel state $s_k$ is to use a \emph{$\delta$-level interfering coverage} $\mathcal{I}^{\delta}_k$  for the typical receiver $Y_0$ with fading state $s_k$, and it is defined as follows\footnote{If $\delta=1$, then $\mathcal{I}^{\delta}_k$ is called the dominant interferer coverage in which a single interferer causes outage at receiver $Y_0$.}:
\begin{equation}\label{Eqn:DomIntReg}
\mathcal{I}^{\delta}_k \defn \left\{X\in \mathbb{R}^2 : \frac{s_k d^{-\alpha}}{\delta\,\ell(|X|)\tilde{H}}< \beta\right\},\,\,\delta\in[1,\infty),
\end{equation}
which means any single interferer within it can cause outage at receiver $Y_0$ with a SIR threshold $\delta\beta$. If $\mathcal{I}^{\delta}_k$ is not empty, it could contain dominant interferers and non-dominant interferers. In addition, $\Pi_k^{\delta}\defn (\Pi\cap\mathcal{I}^{\delta}_k)\setminus X_0$ is called $\delta$-level interfering point process.

The lower bound on the outage probability $q_k(\lambda)$ can be acquired by considering the outage events caused by $\Pi^{\delta}_k$. The upper bound can be approached by finding the probability of the union outage events separately caused by the interferers in $\Pi^{\delta}_k$ and $\Pi\setminus\Pi^{\delta}_k$. These two  bounds found are tighter than those in the previous works\cite{SWXYJGAGDV05}\cite{SWJGAXYGDV07}, as the following theorem shows.
\begin{theorem}\label{Thm:IneqOutProb}
The outage probability $q_k(\lambda)$ in \eqref{Eqn:OutageProb} can be bounded  as
\begin{equation}\label{Eqn:IneqOutProb}
 1- e^{-\lambda(\nu s^{-\frac{2}{\alpha}}_k-\pi)} \leq q_k(\lambda)\leq 1-\left(1-\Lambda_k(\lambda)\right)^{+} e^{-\lambda(\nu s^{-\frac{2}{\alpha}}_k-\pi)},
\end{equation}
where $(x)^+\defn \max\{x,0\}$ and $\Lambda_k(\cdot)$ is defined as
\begin{equation}\label{Eqn:Lambda_k}
\Lambda_k(\lambda)=\frac{s^{\frac{3}{\alpha}-1}_k\lambda \sigma^2}{(s^{\frac{2}{\alpha}-1}_k /d^{\alpha}\delta\beta-\lambda \eta)^2},
\end{equation}
and $\nu$, $\eta$, $\sigma^2$ are respectively given by
\begin{eqnarray*}
\nu=\pi d^2\sum_{k=1}^m \phi_k (\delta\beta s_k)^{\frac{2}{\alpha}},\,\, \eta=\frac{2\nu}{(\alpha-2)d^{\alpha}\delta\beta},\,\,\sigma^2=\frac{\pi d^{2-2\alpha}}{\alpha-1}(\delta\beta)^{\frac{1}{\alpha}-1}\sum_{k=1}^m \phi_k s_k^{\frac{1}{\alpha}+1}.
\end{eqnarray*}
\end{theorem}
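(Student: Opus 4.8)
The plan is to evaluate everything at a typical receiver placed at the origin, which is justified by Slivnyak's theorem, condition on the desired link being in state $s_k$, and rewrite the outage event as $\{I>s_k/(\beta d^\alpha)\}$, where $I=\sum_{X_j\in\Pi\setminus X_0}\tilde{H}_j\ell(|X_j|)$. As $\tau\to\infty$ each interfering mark $\tilde{H}_j$ is distributed according to the invariant law $\Phi$ and is independent across interferers, so by the spatial ergodicity of Lemma~\ref{Lem:SpaErgMarkovChain} the interferers in state $s_j$ form an independent thinned homogeneous PPP of intensity $\phi_j\lambda$. The whole argument is then organized around a single quantity, the void probability of the $\delta$-level region $\mathcal{I}^{\delta}_k$, which supplies the common exponential factor in both bounds.

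First I would compute $\mathbb{E}|\Pi^{\delta}_k|$ by Campbell's theorem. A state-$s_j$ interferer lies in $\mathcal{I}^{\delta}_k$ exactly when $s_j|X|^{-\alpha}>s_k/(\delta\beta d^\alpha)$ with $|X|\ge 1$, i.e. when $1\le|X|<R_{kj}$ with $R_{kj}=d(\delta\beta s_j/s_k)^{1/\alpha}$. Summing the annulus areas $\pi(R_{kj}^2-1)$ against the per-state intensities $\phi_j\lambda$ and invoking the definition of $\nu$ collapses the sum to $\mathbb{E}|\Pi^{\delta}_k|=\lambda(\nu s_k^{-2/\alpha}-\pi)$, so that $\mathbb{P}[\Pi^{\delta}_k=\emptyset]=e^{-\lambda(\nu s_k^{-2/\alpha}-\pi)}$. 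Here I would also flag the mild regularity requirement $R_{kj}\ge 1$, so that each annulus is nonempty and the unit-disk exclusion in $\ell(\cdot)$ contributes exactly the $-\pi$ term.

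For the lower bound I would retain only the dominant interferers, i.e. those that alone violate the SIR constraint; these occupy $\mathcal{I}^{\delta}_k$ at the level $\delta=1$, and the presence of even one of them forces outage, so outage contains $\{\Pi^{\delta}_k\ne\emptyset\}$ and $q_k(\lambda)\ge 1-e^{-\lambda(\nu s_k^{-2/\alpha}-\pi)}$. I would note that the clean valid lower bound corresponds to the dominant level, whereas the free parameter $\delta\ge 1$ is exploited to tighten the \emph{upper} bound. For the upper bound I would split $I=I_{\mathrm{in}}+I_{\mathrm{out}}$ into the contributions of $\Pi^{\delta}_k$ and $\Pi\setminus\Pi^{\delta}_k$ and observe that success is implied by the joint event $\{\Pi^{\delta}_k=\emptyset\}\cap\{I_{\mathrm{out}}\le s_k/(\beta d^\alpha)\}$. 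Because these two events are measurable with respect to the restriction of the PPP to disjoint regions, they are independent, giving $1-q_k(\lambda)\ge e^{-\lambda(\nu s_k^{-2/\alpha}-\pi)}\,\mathbb{P}[I_{\mathrm{out}}\le s_k/(\beta d^\alpha)]$. It then remains to control $\mathbb{P}[I_{\mathrm{out}}>s_k/(\beta d^\alpha)]$ by Chebyshev's inequality, for which I would compute $\mathbb{E}[I_{\mathrm{out}}]$ and $\mathrm{Var}(I_{\mathrm{out}})$ by Campbell's theorem over the complement annuli $|X|\ge R_{kj}$; these moments are $\lambda\eta\, s_k^{1-2/\alpha}$ and a term proportional to $\lambda\sigma^2$, and assembling the Chebyshev ratio yields $\Lambda_k(\lambda)$ together with the truncation $(1-\Lambda_k(\lambda))^{+}$.

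I expect the residual-interference step to be the main obstacle. Carrying out the two radial moments over a union of state-dependent annuli, tracking the exponents of $s_j$, $\delta\beta$ and $d$, and reducing everything through the invariant weights $\phi_j$ to the compact constants $\eta$ and $\sigma^2$ is the bulk of the work; one must also verify $\mathbb{E}[I_{\mathrm{out}}]<s_k/(\beta d^\alpha)$ so that the Chebyshev estimate is informative, which is precisely the positivity of the base $s_k^{2/\alpha-1}/(d^\alpha\delta\beta)-\lambda\eta$ appearing in $\Lambda_k(\lambda)$ and the reason for the $(\cdot)^{+}$. The remaining care is bookkeeping: the independence claim rests on the disjointness of the two regions together with the independent-marks property, and the $\tau\to\infty$ limits are disposed of once and for all by the temporal and spatial ergodicity of Lemmas~\ref{Lem:TemErgMarkovChain} and~\ref{Lem:SpaErgMarkovChain}.
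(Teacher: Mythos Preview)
Your proposal follows essentially the same route as the paper: compute $\mathbb{E}|\Pi^{\delta}_k|=\lambda(\nu s_k^{-2/\alpha}-\pi)$ via Campbell's theorem, obtain the lower bound from the event $\{\Pi^{\delta}_k\ne\emptyset\}$, and for the upper bound split $I=I_{\mathrm{in}}+I_{\mathrm{out}}$, use independence of the two thinned processes, and control the residual tail by Chebyshev after computing its first two moments by Campbell.

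Two small discrepancies are worth flagging. First, you argue the lower bound only at the dominant level $\delta=1$, whereas the paper writes and ``proves'' it for the same general $\delta\ge1$ that appears in $\nu$; your caution is in fact justified, since for $\delta>1$ the event $\{\Pi^{\delta}_k\ne\emptyset\}$ is \emph{not} contained in the outage event, and the paper's one-line justification (``ignores the interference contributed by the transmitters that are not in $\Pi^{\delta}_k$'') does not literally deliver the void probability as a lower bound on $q_k$. Second, to reproduce the stated $\Lambda_k(\lambda)$ exactly you must apply Chebyshev at the relaxed threshold $s_k/(\delta\beta d^{\alpha})$ rather than $s_k/(\beta d^{\alpha})$: the paper inserts the extra inequality $\mathbb{P}[s_k d^{-\alpha}<\beta I^{\mathrm{c}}_k]\le\mathbb{P}[s_k d^{-\alpha}<\delta\beta I^{\mathrm{c}}_k]$ before Chebyshev, which is why $d^{\alpha}\delta\beta$ (and not $d^{\alpha}\beta$) appears in the denominator of $\Lambda_k$. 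With your threshold you would obtain a valid but slightly different (tighter) constant. Finally, your independence claim is correct but the reason is independent marking (the two pieces are an independent thinning of the marked PPP by the mark-dependent indicator of $\mathcal{I}^{\delta}_k$), not merely disjoint \emph{spatial} regions, since $\mathcal{I}^{\delta}_k$ depends on $\tilde{H}$.
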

\begin{proof}
See Appendix \ref{App:ProofThmIneqOutProb}.
\end{proof}

The physical meanings of $\nu$, $\eta$ and $\sigma^2$ are the mean area of $\mathcal{I}^{\delta}_k$ with $s_k=1$, the mean, and the variance of the interference contributed by the interferers  of $\Pi\setminus\Pi^{\delta}_k$ for $\lambda=s_k=1$. When the channel state $s_k$ is high, the outage probability is reduced because SIR is large or equivalently the target SIR $\beta$ is reduced. Nevertheless, it also can be explained from a geometric point of view. In \eqref{Eqn:SIR}, we can let fading gain $s_k$ be incorporated into the path loss model of all interference channels, and according to the conservation property of a homogeneous PPP \cite{DSWKJM96}, the intensity of the original PPP is changed from $\lambda$ to $\lambda/s^{\frac{2}{\alpha}}_k$. This is why $\lambda$ in the bounds is scaled by $s^{-\frac{2}{\alpha}}_k$ and thus interference generated by the PPP with intensity $\lambda/s^{\frac{2}{\alpha}}_k$ is small when $s_k$ is large. So tightness of the bounds in \eqref{Eqn:IneqOutProb} can also be observed.

If $(\cdot)^+$ in \eqref{Eqn:IneqOutProb} is non-zero, the gap between the upper and lower bounds is $\Lambda_k(\lambda) e^{-\lambda(\nu s^{-\frac{2}{\alpha}}_k-\pi)}$ which is a function of $\delta$, $\lambda$ and $s_k$. Since $e^{-\nu\lambda s^{-2/\alpha}_k}$, $\sigma^2$ and $\eta$ are all monotonically decreasing functions of $\delta$, and the denominator of $\Lambda_k(\lambda) e^{-\lambda(\nu s^{-\frac{2}{\alpha}}_k-\pi)}$ is convex for $\delta$, it is easy to realize that $\Lambda_k(\lambda) e^{-\lambda(\nu s^{-\frac{2}{\alpha}}_k-\pi)}$ is smaller than that without $\delta$ if $\delta$ is chosen appropriately. Fig. \ref{Fig:GapBounds} shows the simulation results for channel fading modeled by a 2-state Markov chain. As expected, the two gaps decrease along with $\delta$ so that using $\delta>1$ can make the bounds (much) tighter. In addition, the gap for a good channel state is much larger than that for a bad channel state. Hence, we should choose a sufficiently large $\delta$ in order to have tight bounds when the Markov chain has very good channel states.

The result in \eqref{Eqn:IneqOutProb} will become slightly different if a transmitter uses a channel-aware opportunistic transmission (CAOT) policy. Recall that the states of a Markov fading channel are ordered so that a better state has a higher subscript index. Suppose we call a channel state ``good'' in each FSMC if its subscript index is greater than or equal to $\textsf{g}$ and $s_{\textsf{g}}>1$, which means channel gain $H$ is good if $H\in\mathcal{S}_\textsf{g}\defn\{s_g,\cdots,s_m\}$. Therefore, the PPP with good channel states can be expressed as
\begin{equation}\label{Eqn:TranSetGoodState}
 \Pi_\textsf{g} = \{(X_i,H_i(\tau))\in\Pi: H_i(\tau)\in\mathcal{S}_\textsf{g}\}.
\end{equation}
According to Lemmas \ref{Lem:TemErgMarkovChain} and \ref{Lem:SpaErgMarkovChain}, its intensity is $\lambda_\textsf{g}=\lambda \mathbb{P}[H\in\mathcal{S}_\textsf{g}]=\lambda \sum_{k=\textsf{g}}^m \phi_k$ as time goes to infinity. Therefore, the bounds on the outage probability with CAOT can be obtained from \eqref{Eqn:IneqOutProb} by replacing $\lambda$ with $\lambda_\textsf{g}$, which yields
\begin{equation}\label{Eqn:IneqOutProbOppTran}
1-e^{-\lambda_\textsf{g}(\nu s^{-\frac{2}{\alpha}}_k-\pi)} \leq q_k(\lambda_\textsf{g})\leq 1-\left(1-\Lambda_k(\lambda_{\textsf{g}})\right)^+ e^{-\lambda_\textsf{g}(\nu s^{-\frac{2}{\alpha}}_k-\pi)}.
\end{equation}
Note that the bounds decrease in \eqref{Eqn:IneqOutProb} when $\lambda$ decreases. Thus, the bonds in \eqref{Eqn:IneqOutProbOppTran} decrease compared with \eqref{Eqn:IneqOutProb}. So CAOT improves the bounds on the outage probability of each channel state because nodes with bad channel states refrain from transmitting. In Section \ref{Subsec:DisErgNetCap}, however, we will point out that it may not always improve ETC.

\subsection{Ergodic Transmission Capacity}\label{Subsec:ErgTraCap}

By using the bounds on the outage probability of each channel state in Theorem \ref{Thm:IneqOutProb}, the ETC in \eqref{Eqn:DefTransCap} has bounds as shown in the following theorem.
\begin{theorem}\label{Thm:ErgTraCap}
Suppose the outage probability $q_k(\lambda)$ in \eqref{Eqn:OutageProb} is upper bounded by $\epsilon\in(0,1)$. Using the inequality in \eqref{Eqn:IneqOutProb}, bounds on the maximum contention intensity $\bar{\lambda}_{\texttt{E}}$ which maximize $\sum_{k=1}^m \phi_k q_k(\bar{\lambda}_{\texttt{E}})$ under the constraint of $\epsilon$ can be given by
\begin{equation}\label{Eqn:BoundMaxConIntKi}
\sum_{k=1}^m s_k^{\frac{2}{\alpha}}\phi_k\bar{ \lambda}^{\epsilon}_k \leq \bar{\lambda}_{\texttt{E}} \leq -\ln(1-\epsilon)\sum_{k=1}^m\frac{ s_k^{\frac{2}{\alpha}}\phi_k}{\nu- s_k^{\frac{2}{\alpha}}\pi},
\end{equation}
where $\bar{\lambda}^{\epsilon}_k$ is given by in the following.
\begin{equation}\label{Eqn:LowBoundIntenErgTranCap}
\bar{\lambda}^{\epsilon}_k = \inf\left\{\lambda>0 : \frac{1}{\lambda} \ln\left[\frac{(1-\Lambda_k(\lambda))^+}{1-\epsilon}\right] \leq \left(\nu s^{-\frac{2}{\alpha}}_k-\pi\right)\right\},
\end{equation}
where $\Lambda_k(\lambda)$ is defined in \eqref{Eqn:Lambda_k}.
\end{theorem}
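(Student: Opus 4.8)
The plan is to characterize $\bar{\lambda}_{\texttt{E}}$ as a level crossing and then sandwich it between the two sides of Theorem \ref{Thm:IneqOutProb}. First I would argue that $\lambda\mapsto q_k(\lambda)$ is nondecreasing for each $k$: raising the intensity of the interfering PPP can only enlarge the aggregate interference $I_i$ in \eqref{Eqn:Interference} and hence the outage, so $f(\lambda)\defn\sum_{k=1}^m\phi_k q_k(\lambda)$ is nondecreasing and continuous with $f(0)=0$. Consequently $\{\lambda>0:f(\lambda)\le\epsilon\}$ is an interval $(0,\bar{\lambda}_{\texttt{E}}]$ and $\bar{\lambda}_{\texttt{E}}$ solves $f(\bar{\lambda}_{\texttt{E}})=\epsilon$. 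Writing $L_k(\lambda)\defn 1-e^{-\lambda(\nu s_k^{-2/\alpha}-\pi)}$ and $U_k(\lambda)$ for the lower and upper bounds in \eqref{Eqn:IneqOutProb}, the monotone inclusions $\{\lambda:\sum_k\phi_k U_k(\lambda)\le\epsilon\}\subseteq\{\lambda:f(\lambda)\le\epsilon\}\subseteq\{\lambda:\sum_k\phi_k L_k(\lambda)\le\epsilon\}$ sandwich $\bar{\lambda}_{\texttt{E}}$ between the level crossings of $\sum_k\phi_k U_k$ and $\sum_k\phi_k L_k$, reducing the theorem to converting these two implicit scalar equations into the explicit expressions in \eqref{Eqn:BoundMaxConIntKi}.

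For the upper bound I would work with the lower-bound constraint $\sum_k\phi_k e^{-\lambda(\nu s_k^{-2/\alpha}-\pi)}\ge 1-\epsilon$, which forces $\lambda$ below the crossing of $\sum_k\phi_k L_k$. Using the identity $\frac{s_k^{2/\alpha}}{\nu-s_k^{2/\alpha}\pi}=\frac{1}{\nu s_k^{-2/\alpha}-\pi}$ together with the convexity of $t\mapsto e^{-t}$ — equivalently the weighted AM--HM inequality, which gives $\big(\sum_k\phi_k(\nu s_k^{-2/\alpha}-\pi)\big)\big(\sum_k\frac{\phi_k}{\nu s_k^{-2/\alpha}-\pi}\big)\ge 1$ — I would show that evaluating $\sum_k\phi_k L_k$ at $\lambda^{*}\defn-\ln(1-\epsilon)\sum_k\frac{s_k^{2/\alpha}\phi_k}{\nu-s_k^{2/\alpha}\pi}$ already attains the level $\epsilon$, so $\bar{\lambda}_{\texttt{E}}\le\lambda^{*}$, which is the right inequality in \eqref{Eqn:BoundMaxConIntKi}.

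For the lower bound I would exhibit $\lambda_0\defn\sum_k s_k^{2/\alpha}\phi_k\bar{\lambda}^{\epsilon}_k$ as a feasible intensity, i.e. verify $\sum_k\phi_k U_k(\lambda_0)\le\epsilon$. Here each $\bar{\lambda}^{\epsilon}_k$ of \eqref{Eqn:LowBoundIntenErgTranCap} is precisely the per-state crossing $U_k(\bar{\lambda}^{\epsilon}_k)=\epsilon$, since the displayed condition there rearranges (via $\Lambda_k$ in \eqref{Eqn:Lambda_k}) to $(1-\Lambda_k(\lambda))^{+}e^{-\lambda(\nu s_k^{-2/\alpha}-\pi)}\ge 1-\epsilon$. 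The $s_k^{2/\alpha}$ weights are the effective-intensity scalings noted after Theorem \ref{Thm:IneqOutProb}: absorbing the fading state into the path-loss model rescales the interferer intensity by $s_k^{-2/\alpha}$, and it is this rescaling (together with the spatial thinning $\lambda_k=\phi_k\lambda$ of Lemma \ref{Lem:SpaErgMarkovChain}) that lets the decoupled per-state thresholds be recombined into a single admissible aggregate intensity.

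The main obstacle is exactly this decoupling: the constraint $\sum_k\phi_k q_k(\lambda)\le\epsilon$ ties all $m$ states to one scalar $\lambda$, whereas \eqref{Eqn:BoundMaxConIntKi}--\eqref{Eqn:LowBoundIntenErgTranCap} are per state. Passing between the aggregate exponential sum and the individual exponentials demands that the convexity inequalities be oriented correctly so that monotonicity of $f$ delivers one-sided bounds, and the resulting estimates are tight only up to the AM--HM slack and the $s_k^{2/\alpha}$ factors rather than being exact identities; I expect they are clean in the small-$\epsilon$ sparse/dense regime emphasized in the paper, and a careful check of the regime where $\sum_k\phi_k e^{-\lambda^{*}(\nu s_k^{-2/\alpha}-\pi)}\le 1-\epsilon$ actually holds will be the delicate point.
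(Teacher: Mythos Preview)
Your sandwiching framework---monotonicity of $f(\lambda)=\sum_k\phi_k q_k(\lambda)$ and inclusion of the level sets for $\sum_k\phi_k L_k$ and $\sum_k\phi_k U_k$---is a cleaner setup than the paper's, and the identification of $\bar{\lambda}^{\epsilon}_k$ in \eqref{Eqn:LowBoundIntenErgTranCap} as the per-state crossing $U_k(\bar{\lambda}^{\epsilon}_k)=\epsilon$ is exactly right. Where your plan departs from the paper, and where it has a genuine gap, is the \emph{upper} bound.

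You propose to use convexity of $e^{-t}$ (AM--HM) to show that $\sum_k\phi_k L_k(\lambda^{*})\ge\epsilon$ at $\lambda^{*}=-\ln(1-\epsilon)\sum_k\phi_k/a_k$ with $a_k=\nu s_k^{-2/\alpha}-\pi$. But Jensen gives
\[
\sum_k\phi_k e^{-\lambda^{*}a_k}\;\ge\;e^{-\lambda^{*}\sum_k\phi_k a_k}
=(1-\epsilon)^{(\sum_k\phi_k a_k)(\sum_j\phi_j/a_j)},
\]
and AM--HM makes the exponent $\ge 1$, so the right side is $\le 1-\epsilon$. These two inequalities point in opposite directions and do not compose to $\sum_k\phi_k e^{-\lambda^{*}a_k}\le 1-\epsilon$; convexity alone cannot place $\lambda^{*}$ past the crossing of $\sum_k\phi_k L_k$. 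You flagged the orientation as ``the delicate point'', and it is in fact fatal to this route.

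The paper does not use convexity here. Instead it works per state: from $\sum_k\phi_k e^{-\bar{\lambda}_{\texttt{E}}a_k}\ge 1-\epsilon$ it multiplies through by $\bar{\lambda}_{\texttt{E}}$, introduces auxiliary per-state maximizers $\bar{\lambda}^{\epsilon}_k=\arg\max_{\lambda}\lambda e^{-\lambda a_k}$, and extracts $e^{-\bar{\lambda}^{\epsilon}_k a_k}\ge 1-\epsilon$, i.e.\ $\bar{\lambda}^{\epsilon}_k\le -\ln(1-\epsilon)/a_k$, then recombines these into the stated upper bound on $\bar{\lambda}_{\texttt{E}}$. The decoupling step (passing from one aggregate $\bar{\lambda}_{\texttt{E}}$ to separate $\bar{\lambda}^{\epsilon}_k$'s and back) is the same obstacle you identify, and the paper handles it by this argmax device rather than by a convexity inequality. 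If you want to reproduce the paper's bound you should follow that per-state route; your AM--HM idea would at best yield a bound with $\sum_k\phi_k a_k$ in place of the harmonic-type sum $\sum_k\phi_k/a_k$, which is a different (and generally weaker) statement.

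For the lower bound your plan and the paper's coincide: both take $\bar{\lambda}^{\epsilon}_k$ as the per-state crossing of $U_k$ and assert $\bar{\lambda}_{\texttt{E}}\ge\sum_k s_k^{2/\alpha}\phi_k\bar{\lambda}^{\epsilon}_k$. Your reading of the $s_k^{2/\alpha}$ weights as the conservation-property intensity rescaling is the same interpretation the paper gives after Theorem~\ref{Thm:IneqOutProb}; the recombination into a single feasible aggregate intensity is asserted rather than fully justified in both.
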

\begin{proof}
See Appendix \ref{App:ProofThmErgTraCap}.
\end{proof}

\textbf{Scaling Laws of Ergodic Transmission Capacity}. If we consider $\epsilon\rightarrow 0$, the upper bound in \eqref{Eqn:BoundMaxConIntKi} will reduce to $\epsilon\sum_{k=1}^m s^{\frac{2}{\alpha}}_k\phi_k/(\nu-s^{\frac{2}{\alpha}}_k\pi)+\Theta(\epsilon^2)$ since $-\ln(1-\epsilon)\rightarrow \epsilon$. So we know $\bar{\lambda}^{\epsilon}_k\ll 1$ and thus the solution in \eqref{Eqn:LowBoundIntenErgTranCap} is $\bar{\lambda}^{\epsilon}_k=s^{\frac{2}{\alpha}}_k\epsilon/[(\sigma^2 d^{2\alpha}\beta^2\delta^2 s^{1+\frac{1}{\alpha}}_k+\nu)-s^{\frac{2}{\alpha}}_k\pi]+\Theta(\epsilon^2)$. We can conclude that the lower bounds in \eqref{Eqn:BoundMaxConIntKi} is $\epsilon\sum_{k=1}^m \phi_k s^{\frac{2}{\alpha}}_k/[(\sigma^2 d^{2\alpha}\beta^2\delta^2 s^{1+\frac{1}{\alpha}}_k+\nu)-s^{\frac{2}{\alpha}}_k\pi]+\Theta(\epsilon^2)$ as $\epsilon\rightarrow 0$. So the bounds in \eqref{Eqn:BoundMaxConIntKi} are asymptotically tight when the network is sparse because $\lambda=\Theta(\epsilon)$ as $\epsilon\rightarrow 0$. The scaling behavior of $\bar{\lambda}_{\texttt{E}}$ in this case turns out to be
\begin{equation}\label{Eqn:ScaLawErgTraCap}
\bar{\lambda}_{\texttt{E}}= \Theta\left(\frac{\epsilon}{\nu}\sum_{k=1}^m \frac{\phi_k s^{\frac{2}{\alpha}}_k}{1-\pi s^{\frac{2}{\alpha}}_k/\nu}\right),
\end{equation}
where the argument in $\Theta(\cdot)$ only keeps the \emph{key} parameters of interest.
The above result is also the scaling law for a dense network because $\nu \lambda \gtrapprox -\ln(1-\epsilon)$ as $\pi d^2\lambda$ is sufficiently large. Note that  \eqref{Eqn:ScaLawErgTraCap} is only valid for small $\lambda$ but large $\pi d^2\lambda$ since for small $\lambda$ we have to keep $\epsilon$ small and $\pi d^2$ sufficiently large in order to make $\pi d^2 \lambda$ large and the outage probability lower than $\epsilon$. In addition, if $\nu\rightarrow \infty$ (i.e. $d$ and/or $\delta\rightarrow\infty$), then \eqref{Eqn:ScaLawErgTraCap} further reduces to
\begin{equation}
\bar{\lambda}_{\texttt{E}}= \Theta\left(\frac{\epsilon}{\pi d^2 (\delta\beta)^{\frac{2}{\alpha}}}\right),
\end{equation}
which means ETC is not affected by the fading channel states if the received signal is very weak. This makes sense in that channels can be equivalently viewed in a bad state all the time when the received signal is very weak due to long transmission distance.

\textbf{ETC with Channel-Aware Opportunistic Transmission (CAOT)}.
The result in Theorem \ref{Thm:ErgTraCap} is obtained without any transmission scheduling. Suppose now the channel state information (CSI) is available at each transmitter. Then transmitters can use their CSI to do CAOT and thus we have the following corollary.
\begin{corollary}\label{Cor:ErgTraCapOppTran}
If all transmitters transmit only when their channel fading gains are in $\mathcal{S}_\textsf{g}$, the bounds on the ergodic transmission capacity are
\begin{equation}\label{Eqn:BoundErgTraCapOppTran}
\frac{1}{\varphi_{\textsf{g}}}\sum_{k=\textsf{g}}^m s^{\frac{2}{\alpha}}_k\phi_k\bar{\lambda}^{\epsilon}_k  \leq \bar{\lambda}_{\texttt{E}} \leq \frac{-\ln(1-\epsilon)}{\varphi_{\textsf{g}}}\sum_{k=\textsf{g}}^m \frac{s_k^{\frac{2}{\alpha}}\phi_k}{\nu-s_k^{\frac{2}{\alpha}}\pi},
\end{equation}
where $\varphi_{\textsf{g}}=\sum^m_{k=\textsf{g}} \phi_k$.
\end{corollary}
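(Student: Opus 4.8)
The plan is to deduce the corollary directly from Theorem \ref{Thm:ErgTraCap}, using the fact that channel-aware opportunistic transmission alters the underlying problem only in two structurally simple ways: it restricts the transmitting population to the good states $\mathcal{S}_\textsf{g}$, and it rescales the intensity that drives the interference. First I would fix the active process. By \eqref{Eqn:TranSetGoodState} together with Lemmas \ref{Lem:TemErgMarkovChain} and \ref{Lem:SpaErgMarkovChain}, the nodes that actually transmit form the thinned homogeneous PPP $\Pi_\textsf{g}$ of intensity $\lambda_\textsf{g}=\varphi_\textsf{g}\lambda$, where $\lambda$ is the physical node intensity and $\varphi_\textsf{g}=\sum_{k=\textsf{g}}^m\phi_k$. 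The key observation, recorded in \eqref{Eqn:IneqOutProbOppTran}, is that --- viewed as functions of the physical $\lambda$ --- the per-state outage bounds are exactly the Theorem \ref{Thm:IneqOutProb} bound functions $q_k(\cdot)$ evaluated at the rescaled argument $\lambda_\textsf{g}=\varphi_\textsf{g}\lambda$, with $\nu$ and $\Lambda_k$ retaining their pre-CAOT form. Writing $a_k\defn\nu s_k^{-\frac{2}{\alpha}}-\pi$, this means every exponent rate $a_k$ is effectively replaced by $\varphi_\textsf{g}a_k$ and $\Lambda_k(\lambda)$ by $\Lambda_k(\varphi_\textsf{g}\lambda)$.

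Next I would write the maximal-contention constraint under CAOT. Since nodes in states $k<\textsf{g}$ never transmit, they produce no outage event and drop out of the budget in Definition \ref{Def:ErgTraCap}, so the governing constraint becomes $\sum_{k=\textsf{g}}^m\phi_k\,q_k(\varphi_\textsf{g}\lambda)\le\epsilon$, with the good-state invariant probabilities $\phi_k$ as weights and the outage probabilities evaluated at the active intensity. I would then apply the proof of Theorem \ref{Thm:ErgTraCap} verbatim to this constraint, with the single change that every state sum $\sum_{k=1}^m$ is replaced by $\sum_{k=\textsf{g}}^m$ and every intensity argument carries the extra factor $\varphi_\textsf{g}$. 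This is legitimate because the bounds in \eqref{Eqn:BoundMaxConIntKi} are \emph{separable}: each is a sum of per-state contributions, weighted by $s_k^{\frac{2}{\alpha}}\phi_k$, involving either the per-state root $\bar\lambda^{\epsilon}_k$ of \eqref{Eqn:LowBoundIntenErgTranCap} (for the lower bound) or the per-state reciprocal rate $1/a_k=s_k^{\frac{2}{\alpha}}/(\nu-s_k^{\frac{2}{\alpha}}\pi)$ (for the upper bound). Restricting the population to good states simply discards the terms $k<\textsf{g}$.

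It then remains to extract the factor $\varphi_\textsf{g}^{-1}$ term by term. For the upper bound, replacing $a_k$ by $\varphi_\textsf{g}a_k$ turns each $1/a_k$ into $1/(\varphi_\textsf{g}a_k)$, so the restricted sum carries an overall $\varphi_\textsf{g}^{-1}$ and reproduces the right-hand side of \eqref{Eqn:BoundErgTraCapOppTran}. For the lower bound, $\bar\lambda^{\epsilon}_k$ is by \eqref{Eqn:LowBoundIntenErgTranCap} the intensity at which the upper bound on $q_k$ meets the level $\epsilon$; under CAOT the same equation, now written in the physical intensity, sets that upper bound evaluated at $\varphi_\textsf{g}\lambda$ equal to $\epsilon$, and its solution is $\bar\lambda^{\epsilon}_k/\varphi_\textsf{g}$, again producing the overall $\varphi_\textsf{g}^{-1}$. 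Combining the two restricted, rescaled sums yields \eqref{Eqn:BoundErgTraCapOppTran} exactly.

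I expect the main obstacle to be the bookkeeping of $\varphi_\textsf{g}$ rather than any new estimate. One must check simultaneously that the outage budget is still measured against $\epsilon$ with the unnormalized good-state weights $\phi_k$, that $\nu$ (and the i.i.d. interferer marks that enter it) is left untouched as in \eqref{Eqn:IneqOutProbOppTran} while only the intensity is thinned, and that the separable structure of \eqref{Eqn:BoundMaxConIntKi} really does let a single factor $\varphi_\textsf{g}$ pass through each per-state term. Getting any one of these wrong would produce a spurious $\varphi_\textsf{g}^{-2}$ prefactor or replace $-\ln(1-\epsilon)$ by $-\ln(1-\epsilon\varphi_\textsf{g})$; verifying that exactly one power of $\varphi_\textsf{g}^{-1}$ survives is the crux of the argument.
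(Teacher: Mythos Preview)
Your proposal is correct and follows essentially the same route as the paper's own proof: restrict the state sum in \eqref{Eqn:BoundMaxConIntKi} to $k\ge\textsf{g}$, replace the intensity argument by $\varphi_{\textsf{g}}\lambda$ using \eqref{Eqn:IneqOutProbOppTran}, note that $\nu$, $\eta$, $\sigma^2$ are unchanged (since they depend on the interferer fading law, not on the thinning), and then rerun the derivation of Theorem~\ref{Thm:ErgTraCap} to pull out the single factor $\varphi_{\textsf{g}}^{-1}$. Your explicit bookkeeping of why exactly one power of $\varphi_{\textsf{g}}^{-1}$ survives is more detailed than the paper's terse argument, but the underlying mechanism is identical.
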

\begin{proof}
For CAOT, the bounds on the outage probability for channel state $s_k$ is shown in \eqref{Eqn:IneqOutProbOppTran}. Hence, the bounds in \eqref{Eqn:BoundErgTraCapOppTran} can be acquired by first taking off the terms with an index $i$ lower than $\textsf{g}$ in \eqref{Eqn:BoundMaxConIntKi} and replacing $\bar{\lambda}_{\texttt{E}}$ in the bounds with $\varphi_{\textsf{g}}\bar{\lambda}_{\texttt{E}}$. Since $\nu$, $\eta$ and $\sigma^2$ do not depend on $\lambda_\textsf{g}$, we can replace $\lambda_\textsf{g}$ in \eqref{Eqn:IneqOutProbOppTran} by $\bar{\lambda}$. Then following the same steps in the proof of Theorem \ref{Thm:ErgTraCap} to derive \eqref{Eqn:LowBoundIntenErgTranCap}, the lower bound in \eqref{Eqn:BoundErgTraCapOppTran} is completely achieved.
\end{proof}

\subsection{Observations and Discussion}\label{Subsec:DisErgNetCap}
In the previous subsection, we have obtained bounds on ETC and discussed the scaling laws of ETC for a sparse and dense network. From the bounds and scaling laws, we have made three interesting observations.

\textbf{ETC implicitly possesses a geometric interpretation}. The scaling law of ETC in \eqref{Eqn:ScaLawErgTraCap} can be expressed in a general form by vectors $\Phi$ and $\mathbf{s}_{\epsilon}$ as
\begin{equation}\label{Eqn:ErgNetCapGeoExp}
C_{\texttt{E}}=\Theta\left(\Phi^{\textsf{T}}\mathbf{s}_{\epsilon}\right),
\end{equation}
where $\Phi=[\phi_1,\phi_2,\cdots,\phi_m]^{\textsf{T}}$ and $\mathbf{s}$ is defined as
\begin{equation}
\mathbf{s}_{\epsilon}\defn \epsilon\left[\frac{s^{\frac{2}{\alpha}}_1}{\nu-\pi s^{\frac{2}{\alpha}}_1},\frac{s^{\frac{2}{\alpha}}_2}{\nu-\pi s^{\frac{2}{\alpha}}_2},\cdots,\frac{s^{\frac{2}{\alpha}}_m}{\nu-\pi s^{\frac{2}{\alpha}}_m}\right]^{\textsf{T}}.
\end{equation}
In other words, $C_{\texttt{E}}$ is scaled by the inner product of vectors $\Phi$ and $\mathbf{s}_{\epsilon}$. So the result in  \eqref{Eqn:ErgNetCapGeoExp} can be interpreted from a geometric point of view. Suppose the Markov channel model has two fading states (i.e. $m=2$, this is so called Gilbert-Elliott channel model\cite{ENG60}\cite{EOE63}). Then $C_{\texttt{E}}$ in \eqref{Eqn:ErgNetCapGeoExp} can be schematically presented in Fig. \ref{Fig:ErgNetCapGeoInt}. Note that $\mathbf{s}_{\epsilon}$ must be above on the $45^{\circ}$ line because $s_2$ is larger than $s_1$. The inner product of $\mathbf{s}_{\epsilon}$ and $\Phi$ can be written as $\Phi^{\textsf{T}}\mathbf{s}_{\epsilon}=|\Phi||\mathbf{s}_{\epsilon}|\cos\theta$ and $\theta$ is the angle between vectors $\mathbf{s}_{\epsilon}$ and $\Phi$. So we will have a larger ETC if $\Phi$ has the same direction as $\mathbf{s}_{\epsilon}$. The optimal $\Phi_*$ that maximizes ETC can be given by $\Phi_* = \frac{\mathbf{s}_{\epsilon}}{\mathbf{u}^{\textsf{T}}\mathbf{s}_{\epsilon}}$. Therefore, if all Markov fading channels have the optimal distribution $\Phi_*$, then $C_{\texttt{E}}$ in \eqref{Eqn:ErgNetCapGeoExp} becomes
\begin{equation}\label{Eqn:OptErgNetCapGeoExp}
C_{\texttt{E}}=\Theta\left(\frac{\mathbf{s}^{\textsf{T}}_{\epsilon}\mathbf{s}_{\epsilon}}{\mathbf{u}^{\textsf{T}}\mathbf{s}_{\epsilon}}\right),
\end{equation}
and thus it is completely characterized by all channel fading states.

\textbf{Dominant channel states may not dominate ETC}. Dominant states in a Markov chain means that their invariant probabilities are much larger than other states' invariant probabilities. In other words, if a channel has dominant states then it is in these states most of the time. Dominant channel states may not contribute too much ETC since their state magnitudes could be very small (very bad states). Thus, dominant channel states which really \emph{dominates} ETC only when they have a large magnitude. This point can also be visually explained by Fig. \ref{Fig:ErgNetCapGeoInt}. Suppose $s_2$ dominates and it is much larger than $s_1$. In this case, $\Phi$ and $\mathbf{s}_{\epsilon}$ will move up and be close to the vertical axis. The projection of $\Phi$ on $\mathbf{s}_{\epsilon}$ will largely increase and it is mostly contributed by the vertical component. Thus, whether a channel state dominates ETC or not depends on the product of its magnitude and invariant probability.

\textbf{CAOT may not benefit ETC}. If we compare the results in Theorem \ref{Thm:ErgTraCap} and Corollary \ref{Cor:ErgTraCapOppTran}, we can find CAOT indeed increases the bounds with good channel states. Nevertheless, \emph{it may not always improve ETC since it loses the throughput contributed by bad channel states}. To show this, let the upper bound in \eqref{Eqn:BoundMaxConIntKi} be greater than the upper bound in \eqref{Eqn:BoundErgTraCapOppTran}. This leads to the following inequality:
\begin{equation}\label{Eqn:CondOppTranNoGood}
\left(\frac{1}{\varphi_\textsf{g}}-1\right)\sum_{k=g}^m \frac{s^{\frac{2}{\alpha}}_k\phi_k}{\nu-\pi s^{\frac{2}{\alpha}}_k} < \sum_{k=1}^{g-1} \frac{s^{\frac{2}{\alpha}}_k \phi_k}{\nu-\pi s^{\frac{2}{\alpha}}_k}.
\end{equation}
The LHS in the above expression is the ETC improved by CAOT and the RHS is the ETC loss because of bad channel states. If this inequality is valid, apparently CAOT may not improve ETC because the ETC increase in the good states could not compensate the ETC loss in the bad states. Hence, following from \eqref{Eqn:CondOppTranNoGood}, the better policy of using CAOT for a transmitter is when the following condition holds
\begin{equation}
\varphi_\textsf{g} \leq \frac{\epsilon\sum_{k=g}^m s^{\frac{2}{\alpha}}_k\phi_k/(\nu-\pi s^{\frac{2}{\alpha}}_k)}{\Phi^{\textsf{T}}\mathbf{s}_{\epsilon}}.
\end{equation}
However, the above condition is not implementable by transmitters in a wireless ad hoc network without knowing $\Phi$ in advance. So CAOT is not always an effective means to enhance ETC in a real-time situation. A simulation example for ETC with and without CAOT is shown in Fig. \ref{Fig:ETCwoIM}. Channel fading is modeled by a 2-state Markov chain and the simulation condition is set to let the bad channel state be dominant. Obviously, we can see that ETC with CAOT is worse than ETC without CAOT.

\section{Ergodic Transmission Capacity with Interference Management}\label{Sec:ErgNetCapHieCod}

In Section \ref{Subsec:DisErgNetCap}, we observed that refraining from transmitting when channels are in bad states does not necessarily  increase ETC. This is because the CAOT scheme increases the transmission capacity for good channel states but further lowers the transmission capacity for users in bad channel
states. That is, the throughput increase does not in general compensate for the throughput loss, particularly when bad channel states are dominant (i.e. channels are bad most of the time). Thus, the key to increasing ETC is to boost every entry of vector $\mathbf{s}_{\epsilon}$ and not to sacrifice transmission opportunities of users with bad channel states. Two possible approaches to attaining this goal are through power control and interference management. In general, power control makes the analysis of the outage probability more intractable due to the complex structure of the interference. In addition, it is not a very effective means to increase SIR in a interference-limited ad hoc work\cite{SWJGAXYGDV07}\cite{JGASWMH07}.

\subsection{Interference Management -- A Stochastic Geometry Perspective}

Interference management can be classified into three categories: \emph{interference avoidance, suppression and cancellation}. Avoiding interference in a wireless ad hoc network is typically accomplished by using space, time or frequency orthogonality to eliminate the co-reception of strong interferers. Frequency-hopping and CSMA are prominent examples of avoiding interference in an ad hoc network. Interference suppression deploys signal processing at the transmitter and/or receiver to (linearly) suppress interference without actually cancelling it. Direct-Sequence CDMA (DS-CDMA) is a typical example of this category. In addition, receivers can try to cancel strong interference from their nearby unintended transmitters (e.g. successive interference cancellation (SIC)\cite{TC72,JGA05,DNCTPV05}). Although avoiding interference and suppressing interference are two different methods of reducing interference, we know there exists a duality property between them in a wireless network with Poisson-distributed nodes. According to the conservation property of a homogeneous PPP\cite{DSWKJM96}\footnote{The conservation property of a homogeneous PPP with intensity $\lambda$ is that the intensity will change to $\lambda/a$ if all locations of the nodes in the PPP are scaled by a constant $\sqrt{a}$.}, these two methods both reduce the original intensity of transmitters so that their effect can be demonstrated via another homogeneous PPP with a new intensity. However, interference avoidance has a better efficiency in reducing the intensity of interferers than interference suppression\cite{SWXYJGAGDV05}\cite{JGASWMH07}.

The effect of interference cancellation can also be grasped from a geometric perspective. To explain this, suppose now any receiver $Y_i$  in the network is able to cancel some interference from its nearby interferers after the interference is avoided and/or suppressed. As time goes to infinity, the \emph{interference cancellation coverage} of receiver $Y_0$ with channel state $s_k$ is defined in the following.
\begin{equation}\label{Eqn:CancelCoverage}
\mathcal{C}^{\texttt{c}}_k = \left\{ X\in\mathbb{R}^2 : \frac{\tilde{H}|X|^{-\alpha}}{s_k d^{-\alpha}+\gamma_k I_0 } \geq
 \frac{\beta}{\beta+1} \right\},
\end{equation}
where $I_0$ is the interference of receiver $Y_0$ and $\gamma_k\in(0,1)$ is called interference reduction factor for channel state $s_k$\footnote{Reduction factor $\gamma_k$ can account for the joint effect of interference avoidance and suppression. For example, if there are $M>1$ available channels for DS-CDMA with spreading gain $G>1$, then $\gamma_k=1/GM^{\frac{\alpha}{2}}$.}. Coverage $\mathcal{C}^{\texttt{c}}_k$ means that any received interference within this region can be decoded by receiver $Y_0$ with channel state $s_k$, and all transmitters in $\mathcal{C}^{\texttt{c}}_k$ have a larger received power than transmitter $X_0$ if $\mathcal{C}^{\texttt{c}}_k\cap(\Pi\setminus X_0)$ is not empty and $\beta>1$. Also, $\mathcal{C}^{\texttt{c}}_k$ is a random compact set so that its mean Lebesgue measure $\mu(\mathcal{C}^{\texttt{c}}_k)$ is finite. If each receiver can perfectly cancel all interferers in its cancellation coverage after suppressing and/or avoiding some interference, then its SIR for channel state $s_k$ is
\begin{equation}\label{Eqn:SIRwIntMag}
\mathrm{SIR}_k = \frac{s_k d^{-\alpha}}{\sum_{X_j\in\Pi_k^{\texttt{nc}}}\tilde{H}_j(\tau)|X_j|^{-\alpha}},
\end{equation}
where $\Pi_k^{\texttt{nc}}\defn \Pi\setminus(\mathcal{C}^{\texttt{c}}_k\cap\Pi\cup X_0)$ is the noncancelable part of $\Pi$ with intensity $\lambda \gamma^{\frac{2}{\alpha}}_k$. So equation \eqref{Eqn:SIRwIntMag} essentially suggests that interference management can be equivalently reflected by constructing a new PPP with a reduced intensity. The above $\mathrm{SIR}_k$ expression will be used in the following subsection to find the bounds on the outage probability. Those bounds are used to characterize the bounds on ETC with interference management.

\subsection{Bounds on Outage Probability with Interference Management}\label{Subsec:OutProbIntMag}
Bounds on the outage probability with interference management are shown in the following theorem.
\begin{theorem}\label{Thm:OutProbIntMag}
If each receiver is able to avoid and/or suppress interference, and cancel interferers in its cancellation coverage, then bounds on the outage probability for channel state $s_k$ are given by
\begin{equation}\label{Eqn:IneqOutProbIntMag}
1-e^{-\lambda_k^{\texttt{m}}(\nu s_k^{-\frac{2}{\alpha}}-\pi)}\leq q_k(\lambda) \leq 1-\left(1-\Lambda_k\left(\gamma^{\frac{2}{\alpha}}_k\lambda\right)\right)^+ e^{-\lambda_k^{\texttt{m}}(\nu s_k^{-\frac{2}{\alpha}}-\pi)},
\end{equation}
where $\lambda_k^{\texttt{m}}\defn \gamma^{\frac{2}{\alpha}}_k\lambda \left(1-\frac{\nu^{\texttt{c}}_k}{\nu}\right)^+$ is the average intensity of the transmitters in $\mathcal{I}^{\delta}_k$ with interference management and $\nu^{\texttt{c}}_k$ is the mean Lebesgue measure of $\mathcal{C}^{\texttt{c}}_k$.
\end{theorem}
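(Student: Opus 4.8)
The plan is to adapt the dominant-interferer machinery behind Theorem \ref{Thm:IneqOutProb} to the managed interference field in \eqref{Eqn:SIRwIntMag}, by isolating the separate effects of avoidance/suppression and of cancellation on the relevant point processes. First I would record the two reductions of the interfering process. Avoidance and/or suppression act on every interferer through the reduction factor $\gamma_k$; by the conservation property of a homogeneous PPP (scaling all node locations leaves a PPP but rescales its intensity), the resulting interference field is statistically identical to that produced by a homogeneous PPP of intensity $\gamma_k^{2/\alpha}\lambda$. Cancellation then deletes exactly the points falling in the random compact set $\mathcal{C}^{\texttt{c}}_k$, leaving the noncancelable process $\Pi_k^{\texttt{nc}}$. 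Since $\mathcal{C}^{\texttt{c}}_k$ collects the nearest (strongest) interferers, which are precisely the candidates for dominant interferers, the key bookkeeping is to count how many dominant interferers survive inside the $\delta$-level coverage $\mathcal{I}^{\delta}_k$.

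Next I would compute the effective intensity of surviving dominant interferers. The mean Lebesgue measure of $\mathcal{I}^{\delta}_k$, averaged over the interferer marks with distribution $\{\phi_j\}$, is $\nu s_k^{-2/\alpha}$ (the $\nu$ appearing in Theorem \ref{Thm:IneqOutProb}), while $\mathcal{C}^{\texttt{c}}_k$ has mean measure $\nu^{\texttt{c}}_k$. Modeling cancellation as removing a mean fraction $\nu^{\texttt{c}}_k/\nu$ of the dominant region and combining with the thinning from the first step, the mean density of dominant interferers that remain in $\mathcal{I}^{\delta}_k$ is $\lambda_k^{\texttt{m}}=\gamma_k^{2/\alpha}\lambda(1-\nu^{\texttt{c}}_k/\nu)^+$; the truncation $(\cdot)^+$ covers the regime where the cancellation coverage already exhausts the dominant region, so that no dominant interferer survives. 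By the void probability of this thinned PPP over the region of mean measure $\nu s_k^{-2/\alpha}-\pi$ (the $-\pi$ being the unit exclusion disk forced by $\ell$), the probability that no surviving dominant interferer lies in $\mathcal{I}^{\delta}_k$ is $e^{-\lambda_k^{\texttt{m}}(\nu s_k^{-2/\alpha}-\pi)}$. Because a single surviving dominant interferer already forces outage, this gives the lower bound $q_k(\lambda)\ge 1-e^{-\lambda_k^{\texttt{m}}(\nu s_k^{-2/\alpha}-\pi)}$.

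For the upper bound I would reuse the union-of-events decomposition from the proof of Theorem \ref{Thm:IneqOutProb}: outage occurs only if either a surviving dominant interferer lies in $\mathcal{I}^{\delta}_k$ or the residual interference from $\Pi\setminus\Pi_k^{\delta}$ alone exceeds the SIR threshold. The first event is controlled by the same void factor $e^{-\lambda_k^{\texttt{m}}(\nu s_k^{-2/\alpha}-\pi)}$. For the second event the nondominant (far) interferers lie outside $\mathcal{C}^{\texttt{c}}_k$ and hence are untouched by cancellation; they are only thinned by avoidance/suppression, so their intensity is $\gamma_k^{2/\alpha}\lambda$, and the mean/variance estimate that produced $\Lambda_k(\lambda)$ in \eqref{Eqn:Lambda_k} applies verbatim with $\lambda$ replaced by $\gamma_k^{2/\alpha}\lambda$, giving the factor $(1-\Lambda_k(\gamma_k^{2/\alpha}\lambda))^+$. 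Combining the two contributions through the complement yields the claimed upper bound in \eqref{Eqn:IneqOutProbIntMag}.

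The main obstacle is justifying the effective dominant intensity $\lambda_k^{\texttt{m}}$: turning the heuristic ``a mean fraction $\nu^{\texttt{c}}_k/\nu$ of the dominant coverage is removed'' into a rigorous statement requires controlling the correlation between the random set $\mathcal{C}^{\texttt{c}}_k$ (which depends on $I_0$ and the marks) and the point configuration inside $\mathcal{I}^{\delta}_k$, and arguing that the relevant overlap can be replaced by its mean and normalized by $\nu$ rather than by $\nu s_k^{-2/\alpha}$. A secondary subtlety is cleanly separating which interferers feel cancellation (the near/dominant ones, hence $\lambda_k^{\texttt{m}}$) from which feel only avoidance/suppression (the far ones, hence the $\gamma_k^{2/\alpha}\lambda$ inside $\Lambda_k$); keeping these two intensities distinct is exactly what makes the argument of $\Lambda_k$ differ from the intensity in the exponent.
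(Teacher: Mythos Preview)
Your plan matches the paper's proof almost step for step: the thinning to intensity $\gamma_k^{2/\alpha}\lambda$ via the conservation property, the void-probability lower bound over the surviving dominant region, and the union decomposition with $\Lambda_k(\gamma_k^{2/\alpha}\lambda)$ for the far interferers are exactly what the paper does.

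The one place where the paper goes further than your sketch is precisely the ``main obstacle'' you flag at the end. Rather than treating the removed fraction $\nu^{\texttt{c}}_k/\nu$ as a heuristic mean-field replacement, the paper shows that the two random coverages are \emph{nested} almost surely: comparing the radii of $\mathcal{I}^{\delta}_k$ and $\mathcal{C}^{\texttt{c}}_k$ pointwise (both scale like $\tilde H_j^{1/\alpha}$, so the ratio is deterministic given $I_0$), one gets $\mathcal{C}^{\texttt{c}}_k\subseteq\mathcal{I}^{\delta}_k$ a.s.\ exactly when $\delta\ge(1+\beta)/\beta^2$, and $\mathcal{I}^{\delta}_k\subseteq\mathcal{C}^{\texttt{c}}_k$ a.s.\ otherwise. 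In the first regime the mean number of noncancelled dominant interferers is literally $\gamma_k^{2/\alpha}\lambda\bigl(\mu(\mathcal{I}^{\delta}_k)-\mu(\mathcal{C}^{\texttt{c}}_k)\bigr)$, which gives the $(1-\nu^{\texttt{c}}_k/\nu)$ factor without any correlation worries; in the second regime all dominant interferers are cancelled and the $(\cdot)^+$ truncation kicks in. This nesting dichotomy is the missing ingredient that turns your outline into a proof, and it also explains cleanly why the far-field term sees only $\gamma_k^{2/\alpha}\lambda$ (in the regime $\mathcal{C}^{\texttt{c}}_k\subseteq\mathcal{I}^{\delta}_k$, no point of $\Pi\setminus\Pi^{\delta}_k$ is ever cancelled).
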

\begin{proof}
See Appendix \ref{App:ProofOutProbIntMag}.
\end{proof}

There are a couple of observations that can be drawn from Theorem \ref{Thm:OutProbIntMag}. First, the upper and lower bounds in \eqref{Eqn:IneqOutProbIntMag} are smaller than those in \eqref{Eqn:IneqOutProb}. Canceling interference can be viewed as constructing a new PPP with reduced intensity $\gamma^{\frac{2}{\alpha}}_k\lambda(1-\nu^{\texttt{c}}_k/\nu)^+$ in $\mathcal{I}^{\delta}_k$, and it is more efficient to reduce interference than suppressing interference since it completely eliminates transmitters with strong interference power and thus the $1-\nu^{\texttt{c}}_k/\nu$ term does not have an exponent of $\frac{2}{\alpha}$. Thus, we can infer that imperfect interference cancellation (i.e. interference suppression) is not as efficient as perfect interference cancellation (i.e. intensity reduction) and interference avoidance since it merely decreases transmitters' interference and does not directly reduce transmitter intensity.

Second, interference cancellation is not equally useful for all networks. For example, in a dense network, canceling the strong interferences from the nearby transmitters can significantly reduce outage probability such that network throughput is substantially increased. This point can be easily verified by letting $\lambda\nu$ be sufficiently large. In this case, $q_k(\lambda)$ is close to unity if no interference is canceled. On the other hand, for sparse networks, interference cancellation may merely have a marginal reduction in outage probability. For sufficiently small $\lambda\nu$ and $\nu>\nu^{\texttt{c}}_k$, \eqref{Eqn:IneqOutProbIntMag} can be simplified as $q_k(\lambda)= \left(\frac{\gamma_k}{s_k}\right)^{\frac{2}{\alpha}}\lambda(\nu-\nu^{\texttt{c}}_k)+O((\lambda\nu)^2)$. So when receivers cancel more interferers, its outage probability is reduced by amount of $O(\lambda\nu^{\texttt{c}}_k)$ which is really a small and trivial improvement. From this observation, we see that canceling strong interferers for each channel state may not be an effective means to increase transmission capacity for small $\epsilon$ since the maximum contention intensity of transmission capacity for each channel state is already a small value in this case.

\subsection{ETC with Interference Management}\label{Subsec:ErgTraCapIntMag}
According to the bounds on the outage probability in Theorem \ref{Thm:OutProbIntMag}, ETC with interference management is bounded as shown in the following theorem.
\begin{corollary}\label{Thm:ErgTraCapIntMag}
Suppose interference management is used in the network and each receiver is able to perfectly cancel all interferers in the interference cancellation coverage of each channel state. Let the outage probability be upper bounded by $\epsilon\in(0,1)$. If $\mathcal{C}^{\texttt{c}}_k \subset \mathcal{I}^{\delta}_k$, the maximum contention intensity for channel state $s_k$ has the bounds given by
\begin{equation}\label{Eqn:BoundMaxConIntKiIntMag}
 \sum_{k=1}^m \left(\frac{s_k}{\gamma_k}\right)^{\frac{2}{\alpha}}\phi_k\bar{\lambda}^{\epsilon}_k \leq \bar{\lambda}_{\texttt{E}} \leq - \ln(1-\epsilon)\sum_{k=1}^m\left(\frac{s_k}{\gamma_k}\right)^{\frac{2}{\alpha}}\frac{\phi_k}{(\nu-s^{\frac{2}{\alpha}}_k\pi)(1-\nu^{\texttt{c}}_k/\nu)},
\end{equation}
where $\nu>\nu^{\texttt{c}}_k$ and $\bar{\lambda}^{\epsilon}_k$ is given by
\begin{equation}\label{Eqn:LowBoundMaxConIntKiIntMag}
\bar{\lambda}^{\epsilon}_k = \inf\left\{\lambda>0 : \frac{1}{\lambda}\ln\left[\frac{\left(1-\Lambda_k(\lambda)\right)^+}{1-\epsilon}\right]
\leq\left(1-\frac{\nu^{\texttt{c}}_k}{\nu}\right)\left(\nu-s^{\frac{2}{\alpha}}_k\pi\right)\right\}.
\end{equation}
However, if $ \mathcal{I}^{\delta}_k \subseteq \mathcal{C}^{\texttt{c}}_k$ , then \eqref{Eqn:BoundMaxConIntKiIntMag} becomes
\begin{equation}
\lambda_{\texttt{E}} \geq \sum_{k=1}^m \left(\frac{s_k}{\gamma_k}\right)^{\frac{2}{\alpha}}\phi_k \bar{\lambda}^{\epsilon}_k,
\end{equation}
where $\bar{\lambda}^{\epsilon}_k=\inf\left\{\lambda>0 : \left(1/d^{\alpha}\delta\beta-\lambda\eta s_k^{1-\frac{2}{\alpha}}\gamma_k^{-\frac{2}{\alpha}}\right)^2\leq s^{1-\frac{1}{\alpha}}_k\gamma_k^{\frac{2}{\alpha}}\frac{\sigma^2}{\epsilon}\lambda\right\}$.
\end{corollary}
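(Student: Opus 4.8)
The plan is to treat this corollary as the interference-managed counterpart of Theorem~\ref{Thm:ErgTraCap}, re-running the machinery of its appendix proof essentially verbatim but feeding it the sharper outage bounds of Theorem~\ref{Thm:OutProbIntMag} in place of those of Theorem~\ref{Thm:IneqOutProb}. The two sets of outage bounds have identical algebraic shape; interference management changes only two things, namely that the argument of the residual term $\Lambda_k(\cdot)$ is rescaled from $\lambda$ to the noncancelable intensity $\gamma^{\frac{2}{\alpha}}_k\lambda$, and that the coefficient in the exponent is deflated through $\lambda^{\texttt{m}}_k=\gamma^{\frac{2}{\alpha}}_k\lambda(1-\nu^{\texttt{c}}_k/\nu)^{+}$. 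So my strategy is: (i) extract the per-state critical intensities from the managed bounds in \eqref{Eqn:IneqOutProbIntMag}; (ii) combine them across states with the spatial-ergodicity weights $\phi_k$ supplied by Lemma~\ref{Lem:SpaErgMarkovChain}; and (iii) split into the two geometric regimes according to whether the cancellation coverage sits inside, or swallows, the $\delta$-level interfering region. Throughout I write $a_k\defn \nu s^{-\frac{2}{\alpha}}_k-\pi=s^{-\frac{2}{\alpha}}_k(\nu-s^{\frac{2}{\alpha}}_k\pi)$ for the exponent coefficient.

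First I would handle the lower bound on $\bar{\lambda}_{\texttt{E}}$, which lives in the regime $\mathcal{C}^{\texttt{c}}_k\subset\mathcal{I}^{\delta}_k$, so that $\nu>\nu^{\texttt{c}}_k$ and the truncation $(\cdot)^{+}$ is inactive. Starting from the managed \emph{upper} bound on $q_k$ in \eqref{Eqn:IneqOutProbIntMag}, I impose $q_k\le\epsilon$, take logarithms, and isolate $\lambda$ exactly as in the step that produces \eqref{Eqn:LowBoundIntenErgTranCap}. Because $a_k$ is now multiplied by the factor $(1-\nu^{\texttt{c}}_k/\nu)$ inside the exponent and the argument of $\Lambda_k$ is carried to $\gamma^{\frac{2}{\alpha}}_k\lambda$, the defining inequality for the per-state critical intensity acquires the right-hand side $(1-\nu^{\texttt{c}}_k/\nu)(\nu-s^{\frac{2}{\alpha}}_k\pi)$, which is precisely \eqref{Eqn:LowBoundMaxConIntKiIntMag}. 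Converting each per-state critical value back to the \emph{network} intensity through the conservation property (the effective intensity in state $s_k$ being $\gamma^{\frac{2}{\alpha}}_k\lambda$) attaches an extra $\gamma^{-\frac{2}{\alpha}}_k$ to the weight $s^{\frac{2}{\alpha}}_k$ of Theorem~\ref{Thm:ErgTraCap}, turning it into $(s_k/\gamma_k)^{\frac{2}{\alpha}}$; weighting by $\phi_k$ and summing then yields the claimed lower bound $\sum_k(s_k/\gamma_k)^{\frac{2}{\alpha}}\phi_k\bar{\lambda}^{\epsilon}_k$.

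The upper bound on $\bar{\lambda}_{\texttt{E}}$ is obtained the same way but from the managed \emph{lower} bound on $q_k$, which is the clean exponential $1-e^{-\lambda^{\texttt{m}}_k a_k}$. Here the per-state equation $q_k=\epsilon$ solves in closed form, $\gamma^{\frac{2}{\alpha}}_k\lambda(1-\nu^{\texttt{c}}_k/\nu)a_k=-\ln(1-\epsilon)$, giving per-state critical intensity $-\ln(1-\epsilon)(s_k/\gamma_k)^{\frac{2}{\alpha}}/[(\nu-s^{\frac{2}{\alpha}}_k\pi)(1-\nu^{\texttt{c}}_k/\nu)]$, whose $\phi_k$-weighted sum is exactly the right-hand side of \eqref{Eqn:BoundMaxConIntKiIntMag}. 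The complementary regime $\mathcal{I}^{\delta}_k\subseteq\mathcal{C}^{\texttt{c}}_k$ is then treated separately: every dominant interferer now lies in the cancellation coverage, so $(1-\nu^{\texttt{c}}_k/\nu)^{+}=0$, hence $\lambda^{\texttt{m}}_k=0$ and the exponential factor collapses to unity. The managed upper bound on $q_k$ therefore reduces to the pure residual term $\Lambda_k(\gamma^{\frac{2}{\alpha}}_k\lambda)$, and imposing $\Lambda_k(\gamma^{\frac{2}{\alpha}}_k\lambda)\le\epsilon$, after substituting \eqref{Eqn:Lambda_k} and clearing the square, is exactly the quadratic inequality defining $\bar{\lambda}^{\epsilon}_k$ in this case. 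Since the matching outage \emph{lower} bound is now the trivial $1-e^{0}=0$, no nontrivial upper bound on $\bar{\lambda}_{\texttt{E}}$ survives, which is why only the lower bound is stated.

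The main obstacle I anticipate is the combination step (ii): justifying rigorously that the aggregate constraint $\sum_k\phi_k q_k(\lambda)\le\epsilon$ decomposes into the stated $\phi_k$-weighted sum of per-state critical intensities with scaling weights $(s_k/\gamma_k)^{\frac{2}{\alpha}}$, rather than, say, the minimum over states. Passing from the aggregate transcendental equation $\sum_k\phi_k(1-e^{-\lambda a^{\texttt{m}}_k})=\epsilon$ to the weighted sum rests on the monotonicity and concavity of the managed outage functions in $\lambda$ (the weighted sum being exact when all states coincide and a valid bound otherwise), together with the change of variables supplied by the conservation property; this is precisely where the appendix argument of Theorem~\ref{Thm:ErgTraCap} must be invoked and verified to remain valid under the substitutions $\lambda\mapsto\gamma^{\frac{2}{\alpha}}_k\lambda$ and $a_k\mapsto(1-\nu^{\texttt{c}}_k/\nu)a_k$. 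A secondary bookkeeping hazard is the consistent handling of the $(\cdot)^{+}$ truncations and of the hypothesis $\nu>\nu^{\texttt{c}}_k$, which cleanly separates the two geometric cases but must be tracked so that the boundary $\nu=\nu^{\texttt{c}}_k$ is not silently divided through.
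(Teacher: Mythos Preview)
Your proposal is correct and follows essentially the same approach as the paper: the paper's own proof simply points back to the appendix for Theorem~\ref{Thm:ErgTraCap} and says to re-run those steps with the managed bounds \eqref{Eqn:IneqOutProbIntMag}, then in the regime $\mathcal{I}^{\delta}_k\subseteq\mathcal{C}^{\texttt{c}}_k$ notes that $(\nu-\nu^{\texttt{c}}_k)^{+}=0$ kills the exponential and leaves only the Chebyshev residual $\Lambda_k(\gamma_k^{2/\alpha}\lambda)\le\epsilon$, exactly as you describe. Your flagged ``combination step'' concern is legitimate but is inherited verbatim from the proof of Theorem~\ref{Thm:ErgTraCap} itself, so it is not an additional gap in the corollary.
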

\begin{proof}
See Appendix \ref{App:ProofThmErgTraCapIntMag}.
\end{proof}

By comparing \eqref{Eqn:BoundMaxConIntKiIntMag} with \eqref{Eqn:BoundMaxConIntKi}, we can perceive that the effect of interference cancellation on ergodic transmission capacity can be interpreted to shrink $\nu$ by $(1-\nu^{\texttt{c}}_k/\nu)$-fold. This is equivalent to saying channel gain $s_k$ increases $(1-\nu^{\texttt{c}}_k/\nu)^{-\frac{\alpha}{2}}$-fold. Suppose $\epsilon$ is small and thus from \eqref{Eqn:BoundMaxConIntKiIntMag} and \eqref{Eqn:LowBoundMaxConIntKiIntMag} we know $\bar{\lambda}_{\texttt{E}}\approx \epsilon\sum_{k=1}^m (\frac{s_k}{\gamma_k})^{\frac{2}{\alpha}}\frac{\phi_k}{(\nu-\pi s^{\frac{2}{\alpha}}_k)(1-\nu^{\texttt{c}}_k/\nu)}$. So interference cancellation in a sparse network can make the transmission capacity for channel state $s_k$ increase $(1-\nu^{\texttt{c}}_k/\nu)^{-1}$-fold. Although avoiding and suppressing interference can linearly augment ETC in a sparse network, interference cancellation could contribute much more ETC than them if $\nu^{\texttt{c}}_k$ is very close to $\nu$. Fig. \ref{Fig:ETCwIM} presents a simulation example showing how interference management improves ETC. We first notice that interference management does not provide too much ETC gain when $\epsilon$ is extremely small. On the other hand, if the network is very dense, the efficacy of interference cancellation is seriously weakened because interference is large. So the solid-circle curve of ETC looks like a concave function of $\epsilon$. Therefore, interference management in an extremely sparse or dense network can merely have marginal improvement on outage probability.

How should interference management be used for each channel state to maximize ETC? In Section \ref{Subsec:DisErgNetCap}, we have pointed out that ETC has a geometric interpretation since its bounds can be viewed as the inner product of two vectors: vectors $\Phi$ and $\mathbf{s}_{\epsilon}$ should roughly align. Since vector $\Phi$ is a channel characteristic, it cannot be manipulated to the desired direction. Therefore, the only option is to design vector $\mathbf{s}_{\epsilon}$ such that it is enlarged and rotated to the direction of $\Phi$ as closely as possible. This can be attained by interference management. To illustrate the idea of how to change $\mathbf{s}_{\epsilon}$, the right part of Fig. \ref{Fig:ErgNetCapGeoInt} is redrawn in Fig. \ref{Fig:ErgNetCapGeoIntIntMag}. Let $\mathbf{s}_{\epsilon}=\left[\tilde{s}^{\frac{2}{\alpha}}_1\,\, \tilde{s}^{\frac{2}{\alpha}}_2 \right]^{\intercal}$ where $\tilde{s}_k=s_k\epsilon^{\frac{\alpha}{2}}\left(\nu-\pi s^{\frac{2}{\alpha}}_k\right)^{-\frac{\alpha}{2}}$ and $\mathbf{s}_{\epsilon}^*=\left[(\tilde{s}^*_1)^{\frac{2}{\alpha}}\,\,(\tilde{s}^*_2)^{\frac{2}{\alpha}} \right]^{\intercal}$ represent the optimal vector that $\mathbf{s}_{\epsilon}$ can achieve by interference management. Note that $\tilde{s}_k^* = \frac{\tilde{s}_k}{\gamma_k}(1-\nu^{\texttt{c}}_k/\nu)^{-\frac{\alpha}{2}}$, $k=1,2$. Therefore, after interference is reduced, vertices $\textbf{c}$ $(\tilde{s}^{\frac{2}{\alpha}}_1,0)$ and $\textbf{a}$ $(0,\tilde{s}^{\frac{2}{\alpha}}_2)$ can be maximally pushed out to vertices \textbf{h} $((\tilde{s}^*_1)^{\frac{2}{\alpha}},0)$ and \textbf{f} $(0,(\tilde{s}^*_2)^{\frac{2}{\alpha}})$. Vertex $\textbf{i}$ is the point where $\mathbf{s}_{\epsilon}^*$ is projected on $\Phi$ and Vertex $\textbf{d}$ is the point where $\mathbf{s}_{\epsilon}$ is projected on $\Phi$. The distance from \textbf{m} to \textbf{e} represents the increase of ETC due to interference management.


Since $\mathbf{s}_{\epsilon}^*$ is the best vector $\mathbf{s}_{\epsilon}$ can achieve, how can we make vector $\mathbf{s}_{\epsilon}$ move to vector $\mathbf{s}_{\epsilon}^*$? Namely, how should we choose $\gamma_k$ and $\nu^{\texttt{c}}_k$ for each channel state $s_k$ such that $\mathbf{s}_{\epsilon}$ can approach $\mathbf{s}_{\epsilon}^*$? The policy is to reduce interference for each channel state as much as possible because we can formulate the following nonlinear programming problem to optimize all $\gamma_k$:
\begin{eqnarray}
&\max_{\gamma_k}&\,\sum_{k=1}^m \left(\frac{\phi_k s^{\frac{2}{\alpha}}_k}{\nu-\pi s_k^{\frac{2}{\alpha}}}\right)^{\frac{\alpha}{2}} \frac{1}{\gamma_k\left[(1-\nu^{\texttt{c}}_k/\nu)\right]^{\frac{\alpha}{2}}}\\
&\text{subject to }& \gamma_k \geq \gamma_{\min_k},\quad \text{for all}\,\,k\in\mathcal{M},
\end{eqnarray}
where $\gamma_{\min_k}$ is the lower bound of $\gamma_k$ and it can be determined by the system resources or limitations such as number of available channels and the maximum spreading gain, etc. Note that $\nu^{\texttt{c}}_k$ is a monotonically decreasing and nonlinear function of $\gamma_k$ so that $1/\gamma_k\left[(1-\nu^{\texttt{c}}_k/\nu)\right]^{\frac{\alpha}{2}}$ is also a monotonically decreasing function of $\gamma_k$. Therefore, the optimal solution of $\gamma_k$ must happen at $\gamma_k=\gamma_{\min_k}$, which means interference should be avoided, suppressed and cancelled as much as possible in order to achieve $\mathbf{s}^*_{\epsilon}$. In addition, using interference management could make CAOT perform poorly because it may make most of channel states become ``good'' so that $\varphi_{\textsf{g}}\approx 1$.

\section{Conclusions}

In this paper, we presented a long-term look at the transmission capacity problem, which is completely different from the previous works on investigating network throughput at a particular time point. The motivation of this work is to understand how the temporal characteristic of a channel influences the network throughput with an outage probability constraint. Therefore, all channels are modeled by a $m$-state FSMC that has temporal and spatial ergodic properties. Bounds the on outage probability of each channel state and ETC for the case with and without interference management are all found and they show that ETC can be characterized by the inner product of vectors $\Phi$ and $\mathbf{s}_{\epsilon}$. For a sparse or dense network, the scaling law of ETC that is derived from those bounds provides some guidelines on when to use CAOT and how to do interference management.

\appendix\label{Sec:AppProofThms}
\subsection{Proof of Lemma \ref{Lem:TemErgMarkovChain}}\label{App:ProofTemErgMarkovChain}
\begin{proof}
We need to show that $\frac{1}{L}\sum_{\tau=0}^{L-1}\hbar(Z(\tau))$ converges to $\sum_{i=1}^m \phi_i\, \hbar(s_i)$ as $L\rightarrow\infty$ almost surely. Suppose $\{Z(\tau),\tau\geq 0\}$ has an invariant distribution $\{\phi_i,i=1,2,\cdots,m\}$ and define $V_i(L)\defn \sum_{\tau=0}^{L-1}\mathds{1}_{s_i}(Z(\tau))$ is the number of visits to state $s_i$ before $L$. Since $\hbar$ is a positive function, for any $\mathcal{J}\subseteq \mathcal{S}$ we have
\begin{eqnarray*}
    &&\bigg|\frac{1}{L}\sum_{\tau=0}^{L-1} \hbar(Z(\tau))-\sum_{i=1}^m \phi_i \hbar(s_i)\bigg|\leq \bigg|\sum_{s_i\in\mathcal{S}}\left(\frac{V_i(L)}{L}-\phi_i\right)\hbar(s_i) \bigg|\\
    &&\leq \sum_{s_i\in\mathcal{J}}\bigg|\frac{V_i(L)}{L}-\phi_i\bigg|\hbar(s_i)+\sum_{s_i\notin \mathcal{J}}\bigg|\frac{V_i(L)}{L}-\phi_i\bigg|\hbar(s_i)\\
    &&\leq \sum_{s_i\in\mathcal{J}}\bigg|\frac{V_i(L)}{L}-\phi_i\bigg|\hbar(s_i)+\sum_{s_i\notin \mathcal{J}}\bigg|\frac{V_i(L)}{L}+\phi_i\bigg|\hbar(s_i)
    \stackrel{(\star)}{\leq} 2\sum_{s_i\in\mathcal{J}}\bigg|\frac{V_i(L)}{L}-\phi_i\bigg|\hbar(s_i)+2\sum_{s_i\notin \mathcal{J}}\phi_i\hbar(s_i).
\end{eqnarray*}
where $(\star)$ follows from $\mathbb{P}[\lim_{L\rightarrow\infty} V_i(L)/L=\phi_i]=1$ for all $i=1,2,\cdots,m$.

For a given $\varepsilon>0$, choose $\mathcal{J}$ with an appropriate size and consider $L$ is sufficiently large so that
$$\sum_{i\notin\mathcal{J}}\phi_i\hbar(s_i)< \frac{\varepsilon}{4}\quad \text{and}\quad \sum_{s_i\in\mathcal{J}}\bigg|\frac{V_i(L)}{L}-\phi_i\bigg|\hbar(s_i)<\frac{\varepsilon}{4}.$$
Therefore, when $L$ is sufficiently large it follows that
$$\bigg|\frac{1}{L}\sum_{\tau=0}^{L-1} \hbar(Z(\tau))-\sum_{i=1}^m \phi_j \hbar(s_i)\bigg|<\varepsilon,$$
which establishes the desired convergence. The proof is complete.
\end{proof}

\subsection{Proof of Theorem \ref{Thm:IneqOutProb}} \label{App:ProofThmIneqOutProb}

First of all, we have to find the intensity $\lambda_{\delta}$ of $\Pi^{\delta}_k$. According to \cite{FBBB10}, the Laplace functional of a homogeneous PPP $\Pi$ for a nonnegative function $w : \mathbb{R}^2\rightarrow\mathbb{R}_+$ is given by
\begin{equation}\label{Eqn:LapTranPPP}
\mathcal{L}_{\Pi}(w)\defn \mathbb{E}\left[e^{-\int_{\mathbb{R}^2}w(X)\Pi(\dif X)}\right]=\exp\left(-\int_{\mathbb{R}^2}\lambda(1-e^{-w(X)})\mu(\dif X)\right).
\end{equation}
Since the Laplace functional completely characterizes the distribution of a point process, we can find
the intensity of $\Pi^{\delta}_k$ by calculating $\mathcal{L}_{\Pi^{\delta}_k}(w)$. For a bounded Borel set $\mathcal{A}\subset\mathbb{R}^2$, The Laplace functional of $\Pi^{\delta}_k$ with $w(X)=\tilde{w}(X)\mathds{1}_{\Pi^{\delta}_k}(X)$ can be written as follows:
\begin{eqnarray*}
\mathcal{L}_{\Pi^{\delta}_k}(w)&=& e^{-\lambda\mu(\mathcal{A})}\sum_{i=0}^{\infty} \frac{\lambda^i}{i!}\int_{\mathcal{A}}\cdots\int_{\mathcal{A}}\prod_{j=1}^i\left(e^{-w(X_j)}\mathbb{P}[X_j\in\Pi^{\delta}_k]+\mathbb{P}[X_j\notin\Pi^{\delta}_k]\right)\mu(\dif X_1)\cdots\mu(\dif X_i)\\
&=& e^{-\lambda \mu(\mathcal{A})}\sum_{i=0}^{\infty} \frac{1}{i!}\left\{\int_{\mathcal{A}}\left(e^{-g(Y)}\mathbb{P}[Y\in\Pi^{\delta}_k]+1-\mathbb{P}[Y\in\Pi^{\delta}_k]\right)\lambda\mu(\dif Y)\right\}^i\\
&\stackrel{(a)}{=}& \exp\left(-\lambda \int_{\mathcal{A}}\left(1-e^{-g(Y)}\right)\left[\sum_{i=1}^m \mathds{1}\left(s_i\in\left[\frac{s_k d^{-\alpha}}{\ell(|Y|)\delta\beta},\infty\right)\right)\phi_i\right]\mu(\dif Y)\right),
\end{eqnarray*}
where $(a)$ follows from the property of spatial ergodicity. Letting $\mathcal{A}\rightarrow\mathbb{R}^2$ and according to \eqref{Eqn:LapTranPPP}, we know the intensity of $\Pi^{\delta}_k$ is
\begin{equation}
\lambda^{\delta}_k(x) = \lambda \sum_{i=1}^m \phi_i\mathds{1}\left(s_i\in\left[\frac{s_k d^{-\alpha}}{\ell(x)\delta\beta},\infty\right)\right),\quad x\in\mathbb{R}_+.
\end{equation}
So $\Pi^{\delta}_k$ is a non-homogeneous PPP since $\lambda^{\delta}_k$ depends on $x$.

Since $\Pi^{\delta}_k(\mathcal{I}^{\delta}_k)$ is a Poisson random variable, its mean can be found as follows:
\begin{eqnarray*}
\mathbb{E}[\Pi^{\delta}_k(\mathcal{I}^{\delta}_k)] &=& \mathbb{E}\left[\sum_{X_j\in\Pi\setminus X_0} \mathds{1}_{\Pi^{\delta}_k}(X_j)\right]\stackrel{(b)}{=}  \int_{\mathbb{R}^2} \mathbb{E}[\mathds{1}_{\Pi^{\delta}_k}(X)] \mu(\dif X)\nonumber\\
&=& \int_{\mathbb{R}^2} \lambda_{\delta}(|X|)\, \mu(\dif X)
= 2\pi\lambda \int_0^{\infty}\left[\sum_{i=1}^m \mathds{1}\left(s_i\in\left[\frac{s_k d^{-\alpha}}{\ell(x)\delta\beta},\infty\right)\right)\phi_i\right] x\, \dif x \\
&=& 2\pi\lambda\sum_{i=1}^m \phi_i \int_1^{d\sqrt[\alpha]{\frac{\delta\beta s_i}{s_k}}}x \dif x = \pi\lambda s_k^{-\frac{2}{\alpha}}\left(d^2(\delta\beta)^{\frac{2}{\alpha}} \sum_{i=1}^m \phi_i s_i^{\frac{2}{\alpha}}-s_k^{\frac{2}{\alpha}}\right)= \lambda( s^{-\frac{2}{\alpha}}_k\nu-\pi).
\end{eqnarray*}
where $(b)$ follows from the Campbell theorem \cite{DSWKJM96}. Let $\mathcal{E}(\Pi^{\delta}_k)$ denote the outage event caused by any transmitters in $\Pi^{\delta}_k$ and its probability is
$$\mathbb{P}[\mathcal{E}(\Pi^{\delta}_k)]=1-\exp\left(-\lambda\left(\nu s^{-\frac{2}{\alpha}}_k-\pi\right)\right)\leq q_k(\lambda),$$
which is a lower bound of  $q_k(\lambda)$ because it ignores the interference contributed by the transmitters that are not in $\Pi^{\delta}_k$.

Let $\mathcal{E}^{\textsf{c}}(\Pi^{\delta}_k)$ be the complement event of $\mathcal{E}(\Pi^{\delta}_k)$ and $\mathcal{E}^{\textsf{c}}(\Pi^{\delta}_k)$ means the outage event caused by the transmitters of $\Pi\setminus\Pi^{\delta}_k$. So the upper bound of $q_k(\lambda)$ is given by
\begin{eqnarray}\label{Eqn:UppBondProof}
q_k(\lambda) \leq \mathbb{P}[\mathcal{E}(\Pi^{\delta}_k)\cup \mathcal{E}^{\textsf{c}}(\Pi^{\delta}_k)]
= 1-e^{-\lambda(\nu s^{-\frac{2}{\alpha}}_k-\pi)}+\mathbb{P}[\mathcal{E}^{\textsf{c}}(\Pi^{\delta}_k)]e^{-\lambda(\nu s^{-\frac{2}{\alpha}}_k-\pi)},
\end{eqnarray}
where $\mathbb{P}[\mathcal{E}^{\textsf{c}}(\Pi^{\delta}_k)]=\mathbb{P}[s_kd^{-\alpha}< \beta I^{\textsf{c}}_k]$ and $I^{\textsf{c}}_k$ is the interference contributed by the transmitters of  $\Pi\setminus\Pi^{\delta}_k$. Unfortunately, it is impossible to explicitly calculate $\mathbb{P}[\mathcal{E}(\mathcal{I}^{\textsf{c}}_k)]$ and thus we resort to find its upper bound by Chebyshev's inequality. Using Campbell's theorem, the mean and variance of interference $I^{\textsf{c}}_k$ can be calculated as follows:
\begin{eqnarray*}
\mathbb{E}[I^{\textsf{c}}_k] = \mathbb{E}\left[\sum_{X_l\in\Pi}\tilde{H}_l\ell(|X_l|)\mathds{1}_{\Pi\setminus\Pi^{\delta}_k}(X_l)\right]= \mathbb{E}[\tilde{H}]\int_{\mathbb{R}^2} \ell(|X|)\left[\lambda-\lambda_{\delta}(|X|)\right]\mu(\dif X)=\lambda s^{1-\frac{2}{\alpha}}_k\, \eta,
\end{eqnarray*}
\begin{eqnarray*}
\mathrm{Var}[I^{\textsf{c}}_k]= \mathbb{E}\left[(I^{\textsf{c}}_k)^2\right]-\mathbb{E}[I^{\textsf{c}}_k]^2=\mathbb{E}[\tilde{H}^2]\int_{\mathbb{R}^2} [\ell(|X|)]^2 \left[\lambda-\lambda_{\delta}(|X|)\right]\mu(\dif X)=\lambda s^{1-\frac{1}{\alpha}}_k\,\sigma^2,
\end{eqnarray*}
where $\sigma^2$ is bounded is due to bounded $\eta$ since $[\ell(x)]^2 \leq \ell(x)$.
The upper bound of $\mathbb{P}[\mathcal{E}^{\textsf{c}}(\Pi^{\delta}_k)]$ can be obtained by
\begin{eqnarray*}
  \mathbb{P}[\mathcal{E}^{\textsf{c}}(\Pi^{\delta}_k)] = \mathbb{P}[d^{-\alpha}< \delta\beta I^{\textsf{c}}_k]
   \leq \mathbb{P}\left[\frac{d^{-\alpha}-s^{1-\frac{2}{\alpha}}_k\delta\beta\lambda \eta}{|d^{-\alpha}-s^{1-\frac{2}{\alpha}}_k\delta\beta\lambda \eta|} \leq \frac{\delta\beta|I^{\textsf{c}}_k-s^{1-\frac{2}{\alpha}}_k\lambda \eta|}{|d^{-\alpha}-s^{1-\frac{2}{\alpha}}_k\delta\beta\lambda \eta|}\right]\stackrel{(c)}{\leq} \frac{s^{\frac{3}{\alpha}-1}_k\lambda \sigma^2}{(s^{\frac{2}{\alpha}-1}_k/d^{\alpha}\delta\beta-\lambda \eta)^2},
\end{eqnarray*}
where $(c)$ follows from Chebyshev's inequality. Substituting the above result into \eqref{Eqn:UppBondProof}, the proof is complete.

\subsection{Proof of Theorem \ref{Thm:ErgTraCap}} \label{App:ProofThmErgTraCap}

Since $\sum_{k=1}^m \phi_k q_k(\lambda)\leq \epsilon$, the lower bound in \eqref{Eqn:IneqOutProb} with $\bar{\lambda}_{\texttt{E}}$ can be rewritten as
\begin{equation*}
\sum_{k=1}^m \phi_k \exp\left(-\bar{\lambda}_{\texttt{E}}(\nu s^{-\frac{2}{\alpha}}_k-\pi)\right)\geq 1-\sum_{k=1}^m \phi_k q_k(\bar{\lambda}_{\texttt{E}})\geq 1-\epsilon,
\end{equation*}
which gives us
\begin{equation*}
\bar{\lambda}_{\texttt{E}}\sum_{k=1}^m \phi_k e^{-\bar{\lambda}_{\texttt{E}}(\nu s^{-\frac{2}{\alpha}}_k-\pi)}\geq \bar{\lambda}_{\texttt{E}}\Rightarrow \sum_{k=1}^m  \bar{\lambda}_k^{\epsilon}\phi_k e^{-\bar{\lambda}_k^{\epsilon}(\nu s_k^{-\frac{2}{\alpha}}-\pi)}\geq \sum_{k=1}^m \phi_k \bar{ \lambda}_k^{\epsilon}(1-\epsilon),
\end{equation*}
where $\bar{\lambda}_k^{\epsilon}=\arg\max_{\lambda}\lambda e^{-\lambda(\nu s^{-\frac{2}{\alpha}}_k-\pi)}$. Hence, $e^{-\bar{\lambda}^{\epsilon}_k(\nu s^{-\frac{2}{\alpha}}_k-\pi)}\geq 1-\epsilon$ so that we have
\begin{equation}
\bar{\lambda}^{\epsilon}_k \leq \frac{-\ln(1-\epsilon)}{s^{-\frac{2}{\alpha}}_k\nu-\pi}\Rightarrow \bar{\lambda}_{\texttt{E}}\leq \frac{-\ln(1-\epsilon)}{s^{-\frac{2}{\alpha}}_k\nu-\pi} \sum_{k=1}^m \phi_k s^{\frac{2}{\alpha}}_k .
\end{equation}
So the upper bound on $\bar{\lambda}_{\texttt{E}}$ is acquired.

Similarly, we know $1-\sum_{k=1}^m \phi_k q_k(\bar{\lambda}_{\texttt{E}})\geq 1-\epsilon$ and \eqref{Eqn:IneqOutProb} with $\bar{\lambda}_{\texttt{E}}$ can give us another lower bound on $1-\sum_{k=1}^m \phi_k q_k(\bar{\lambda}_{\texttt{E}})$. Combining these two lower bounds, it yields the following result:
\begin{equation}
1-\sum_{k=1}^m \phi_i q_i(\bar{\lambda}_{\texttt{E}})\geq \max\left\{1-\epsilon, \sum_{k=1}^m \phi_k (1-\Lambda_k\left(\bar{\lambda}_{\texttt{E}}\right))^+ e^{-\bar{\lambda}_{\texttt{E}}(\nu s^{-\frac{2}{\alpha}}_k-\pi)}\right\}.
\end{equation}
Since $(1-\Lambda_k(\bar{\lambda}_{\texttt{E}}))^+ e^{-\bar{\lambda}_{\texttt{E}}(\nu s^{-\frac{2}{\alpha}}_k-\pi)}$ is a monotonically decreasing function of $\bar{\lambda}_{\texttt{E}}$, the lower bound on $\bar{\lambda}_{\texttt{E}}$ must happen when $(1-\Lambda_k(\bar{\lambda}_{\texttt{E}}))^+ e^{-\bar{\lambda}_{\texttt{E}}(\nu s^{-\frac{2}{\alpha}}_k-\pi)}$  is equal to $1-\epsilon$, which means $\bar{\lambda}^{\epsilon}_k= \inf\{\lambda: (1-\Lambda_k(\lambda))^+ e^{-\lambda(\nu s^{-\frac{2}{\alpha}}_k-\pi)}\leq 1-\epsilon\}$. We can explicitly write the relationship $(1-\Lambda_k(\lambda))^+e^{-\lambda(\nu s^{-\frac{2}{\alpha}}_k-\pi)}\leq 1-\epsilon$ in the following.
\begin{equation*}
\ln\left[\frac{\left(1-\Lambda_k(\lambda)\right)^+}{(1-\epsilon)}\right] \leq \lambda(\nu s^{-\frac{2}{\alpha}}_k-\pi),
\end{equation*}
which implies
\begin{equation*}
\bar{\lambda}^{\epsilon}_k = \inf\left\{\lambda>0 : \frac{1}{\lambda} \ln\left[\frac{(1-\Lambda_k(\lambda))^+}{1-\epsilon}\right] \leq \left(\nu s^{-\frac{2}{\alpha}}_k-\pi\right)\right\}.
\end{equation*}
Thus, the lower bound on $\bar{\lambda}_{\texttt{E}}$ can be written as $\sum_{k=1}^m \phi_k s^{\frac{2}{\alpha}}_k\bar{\lambda}_k^{\epsilon}$. The proof is complete.

\subsection{Proof of Theorem \ref{Thm:OutProbIntMag}}\label{App:ProofOutProbIntMag}

Since the interference generated by $\Pi$ is scaled by $\gamma_k$, it is equivalent to the interference generated by a homogeneous PPP with intensity $\gamma^{\frac{2}{\alpha}}_k\lambda$. The intensity of $\Pi_k^{\texttt{nc}}$ at location $X$ can be shown to be
\begin{equation*}
\lambda_k^{\texttt{nc}}(|X|) = \gamma^{\frac{2}{\alpha}}_k\lambda\mathbb{P}\left[\frac{\tilde{H}|X|^{-\alpha}}{s_k d^{-\alpha}+ I_0}<\frac{\beta}{1+\beta}\right].
\end{equation*}
The average number of nodes in $\mathcal{C}^{\texttt{c}}_k$ is $\int^{\infty}_0 [\gamma^{\frac{2}{\alpha}}_k\lambda-\lambda_{k}^{\texttt{nc}}(|X|)]\mu(\dif X)=\gamma^{\frac{2}{\alpha}}_k\lambda\nu^{\texttt{c}}_k$. The lower bound on the outage probability can be characterized by the average number of noncancelable nodes within $\mathcal{I}^{\delta}_k$, that means we have to find $\gamma^{\frac{2}{\alpha}}_k\lambda\mu(\mathcal{C}^{\texttt{c}}_k\cap\mathcal{I}^{\delta}_k)$ which is the average number of nodes in $\Pi_k^{\texttt{nc}}\cap\mathcal{I}^{\delta}_k$.

Transmitters in $\mathcal{I}^{\delta}_k$ and $\mathcal{C}^{\texttt{c}}_k$ must satisfy the following in two equalities, respectively:
\begin{eqnarray*}
|X_j|\leq \left(\frac{\delta\beta \tilde{H}_j}{s_k}\right)^{1/\alpha}d\,\,(\text{for}\,\, X_j\in\mathcal{I}^{\delta}_k)\,\,\text{ and }
|X_j|\leq \left[\frac{\tilde{H}_j(\beta+1)}{\beta (d^{\alpha}\gamma_k I_0+s_k)}\right]^{1/\alpha} d \,\,(\text{for}\,\, X_j\in\mathcal{C}^{\texttt{c}}_k).
\end{eqnarray*}
So for any $X_j\in \mathcal{I}^{\delta}_k\cap\mathcal{C}^{\texttt{c}}_k$, we must have
\begin{equation}
|X_j|\leq d\left(\frac{\delta\beta\tilde{H}_j}{s_k}\,\min\left\{1,\frac{1+\beta}{\delta\beta^2(1+d^{\alpha}\gamma_k I_0/s_k)}\right\}\right)^{1/\alpha}.
\end{equation}
If $\mathcal{C}^{\texttt{c}}_k\subseteq \mathcal{I}^{\delta}_k$ a.s., then we must have
$$\delta \geq \frac{1+\beta}{\beta^2(1+d^{\alpha}\gamma_k I_0/s_k)},\quad a.s.\,\,\Rightarrow \delta \geq \frac{1+\beta}{\beta^2}.$$
In this case, the average number of noncancelable $\delta$-level interferers is $\lambda\gamma^{\frac{2}{\alpha}}_k (\mu(\mathcal{I}^{\delta}_k)-\mu(\mathcal{C}^{\texttt{c}}_k))$ which is equal to $\lambda \gamma_k^{\frac{2}{\alpha}} \nu(1-\nu^{\texttt{c}}_k/\nu)$. On the other hand, if $\mathcal{I}^{\delta}_k\subset \mathcal{C}^{\texttt{c}}_k$ a.s., $\delta< \frac{1+\beta}{\beta^2}$ and thus $\nu^{\texttt{c}}_k>\nu$ in this case and thus all interferes in $\mathcal{I}^{\delta}_k$ are cancelable. Combining these two cases, the average number of noncancelable $\delta$-level interferers should be written as $\nu\lambda^{\texttt{m}}_k$. Using $\lambda^{\texttt{m}}_k$ to replace $\lambda$ of the lower bound in \eqref{Eqn:IneqOutProb}, we can have the lower bound on the outage probability.

The $(\cdot)^+$ term of the upper bound is due to the outage caused by the transmitters out of $\mathcal{I}^{\delta}_k$ so that the intensity of $\Pi\setminus \Pi^{\delta}_k$ is $\gamma^{2/\alpha}_k \lambda$ because no interferers are canceled in $\Pi\setminus \Pi^{\delta}_k$. So we can just replace $\lambda$ in the $(\cdot)^+$ term of \eqref{Eqn:IneqOutProb} by $\gamma^{2/\alpha}_k\lambda$ and also replace $\lambda$ in the exponential term with $\lambda_k^{\texttt{m}}$. Then the upper bound is obtained.

\subsection{Proof of Corollary \ref{Thm:ErgTraCapIntMag}}\label{App:ProofThmErgTraCapIntMag}

By considering the given condition $\sum_{k=1}^m \phi_k q_k(\bar{\lambda}_{\texttt{E}})\leq\epsilon$, the success probability for channel state $s_k$ obtained from \eqref{Eqn:IneqOutProbIntMag} is bounded as follows.
\begin{equation*}
1-\sum_{k=1}^m \phi_k\epsilon\leq 1-\sum_{k=1}^m \phi_k q_k(\bar{\lambda}_{\texttt{E}})\leq \sum_{k=1}^m \phi_k e^{-\nu_{\delta}s^{-2/\alpha}_k\lambda^{\texttt{m}}_k}.
\end{equation*}
If $\mathcal{C}^{\texttt{c}}_k \subset \mathcal{I}^{\delta}_k$, then using the above inequality and following the same steps of finding the upper and lower bounds in the proof of Theorem \ref{Thm:ErgTraCap}, we can show the results in \eqref{Eqn:BoundMaxConIntKiIntMag} and \eqref{Eqn:LowBoundMaxConIntKiIntMag}. If $\mathcal{I}^{\delta}_k \subseteq \mathcal{C}^{\texttt{c}}_k$, then the upper bound on success probability is no longer available since all $\delta$-level interferers are canceled and thus $(\nu-\nu^{\texttt{c}}_k)^+ = 0$. As shown in the proof of Theorem \ref{Thm:ErgTraCap}, there are two lower bounds on the success probability: one is $1-\epsilon$, the other is $\sum_{k=1}^m \phi_k \left(1-\Lambda_k\left(\bar{\lambda}_k\gamma_k^{\frac{2}{\alpha}}\right)\right)^+$. Considering $\Lambda_k(\cdot)<1$, the lower bound on the success probability can be expressed as
\begin{equation*}
1-\sum_{k=1}^m \phi_k q_k(\bar{\lambda}_{\texttt{E}})\geq 1-\min\left\{\epsilon,\sum_{k=1}^m \phi_k\Lambda_k\left(\bar{\lambda}_{k}\gamma_k^{\frac{2}{\alpha}}\right)\right\},
\end{equation*}
which yields the following inequality
\begin{equation*}
\frac{s^{\frac{3}{\alpha}-1}_k\lambda \sigma^2/\gamma_k^{\frac{2}{\alpha}}}{\left[s^{\frac{2}{\alpha}-1}_k/\gamma_k^{\frac{2}{\alpha}}d^{\alpha}\delta\beta-\lambda\eta\right]^2}\geq \epsilon.
\end{equation*}
This leads to the following condition:
$\bar{\lambda}^{\epsilon}_k = \inf\left\{\lambda>0 : s^{\frac{1}{\alpha}-1}_k\gamma_k^{-\frac{2}{\alpha}}\left(1/d^{\alpha}\delta\beta-\lambda\eta s_k^{1-\frac{2}{\alpha}}\gamma_k^{-\frac{2}{\alpha}}\right)^2\leq \lambda\frac{\sigma^2}{\epsilon}\right\}$, which renders us the lower bound $\sum_{k=1}^m (s_k/\gamma_k)^{\frac{2}{\alpha}}\phi_k \bar{\lambda}^{\epsilon}_k$. This completes the proof.

\bibliographystyle{ieeetran}
\bibliography{IEEEabrv,Ref_ErgodicTC}


\begin{table}[!h]
  \centering
  \caption{Summary of Main Mathematical Notation}\label{Tab:MathNotation}
  \begin{tabular}{|c|c|}
  \hline
  Symbol & Definition\\ \hline
  $\Pi$  & Homogeneous PPP of transmitters \\
  $\lambda$ & Intensity (density) of $\Pi$\\
  $C_{\texttt{E}}$  & Ergodic transmission capacity\\
  $\bar{\lambda}_{\texttt{E}}$ & Maximum Contention Intensity\\
  $\epsilon$ & Upper bound of outage probability\\
  $d$ & Transmission distance of a TX-RX pair\\
  $\mathcal{S}\in\mathbb{R}^{m}_+$ & $m$-state Markov chain for modeling fading\\
  $s_k$ & $k$th state of Markov chain $\mathcal{S}$ \\
  $\phi_k$ & Invariant (steady state) probability of channel state $s_k$\\
  $H(\tau)$ & Fading channel gain at time $\tau$,\,$H(\tau)\in\mathcal{S}$\\
  $\alpha>2$ & Path loss exponent\\
  $\beta$ & SIR threshold for successful decoding\\
  $\mathcal{I}^{\delta}_k$ & $\delta$-level interfering coverage for channel state $s_k$\\
  $\delta\geq 1$ & Parameter of defining $\mathcal{I}^{\delta}_k$\\
  $\nu$ & Mean area of $\mathcal{I}^{\delta}_k$ for $s_k=1$\\
  $\mathcal{C}^{\texttt{c}}_k$ & Interference cancellation coverage for channel state $s_k$ \\
  $\gamma_k\in(0,1)$ & Interference reduction factor for channel state $s_k$\\
  $\ell(|\cdot|)$ & Path loss function\\
  $q_k(\cdot)$ & Outage probability for channel state $s_k$ \\
  $\mu(\mathcal{A})$ & Lebesgue measure of set $\mathcal{A}$\\
  \hline
  \end{tabular}
\end{table}

\begin{figure}[!h]
\centering
  \includegraphics[scale=0.6]{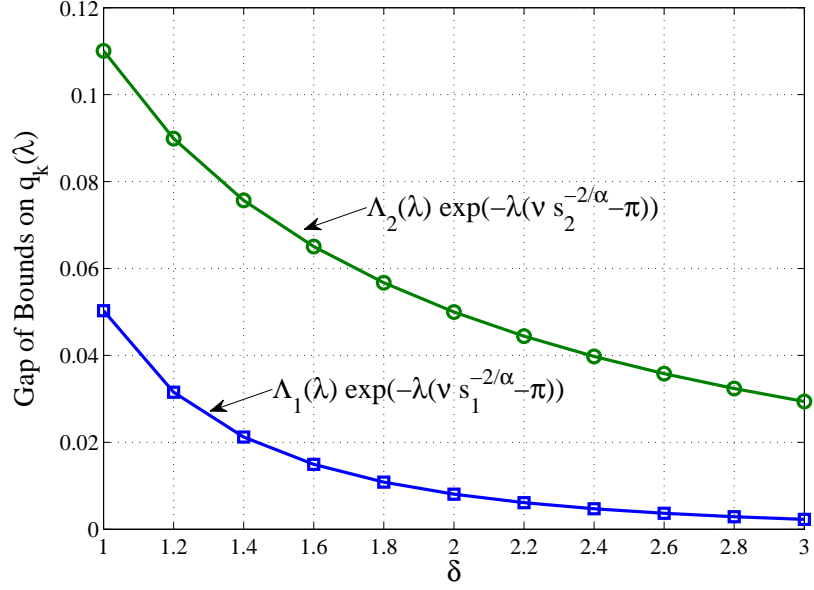}\\
  \caption{The gap between the upper and lower bounds on $q_k(\lambda)$. Channel fading is modeled by a Markov chain with 2 states. The network parameters for simulation are: $d=5m$, $\lambda=0.01$, $\alpha=3$, $\beta=2$, $s_1=0.5$, $s_2=2$ and $\phi_1=\phi_2=0.5$.}\label{Fig:GapBounds}
\end{figure}

\begin{figure}[!h]
\centering
  \includegraphics[scale=0.65]{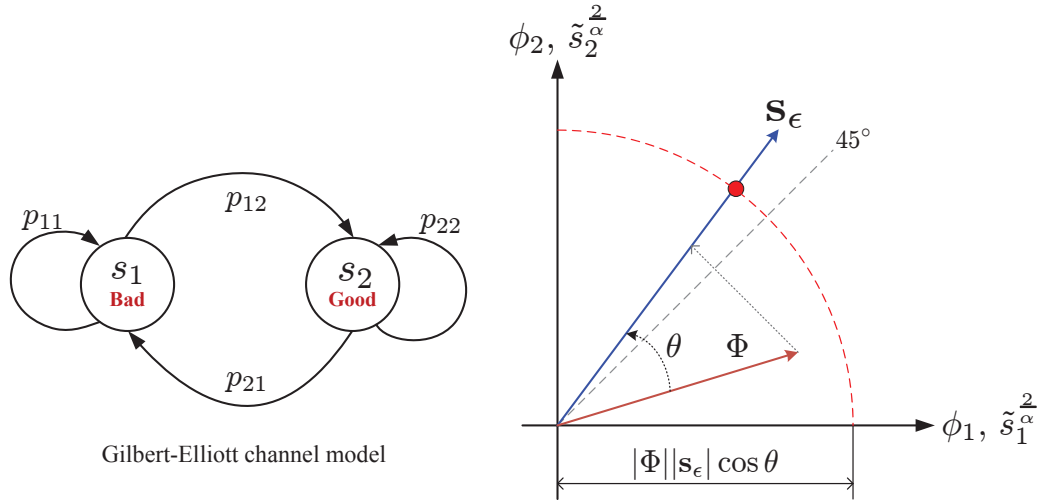}\\
  \caption{The Gilbert-Elliott channel model and its corresponding geometric presentation of $C_{\texttt{E}}$, where $\{p_{ij}\}$ are the state transition probabilities for the FSMC model and $\tilde{s}_k=s_k/(\nu-\pi s^{\frac{2}{\alpha}}_k)^{\frac{\alpha}{2}}$.}\label{Fig:ErgNetCapGeoInt}
\end{figure}

\begin{figure}[!h]
\centering
  \includegraphics[scale=0.6]{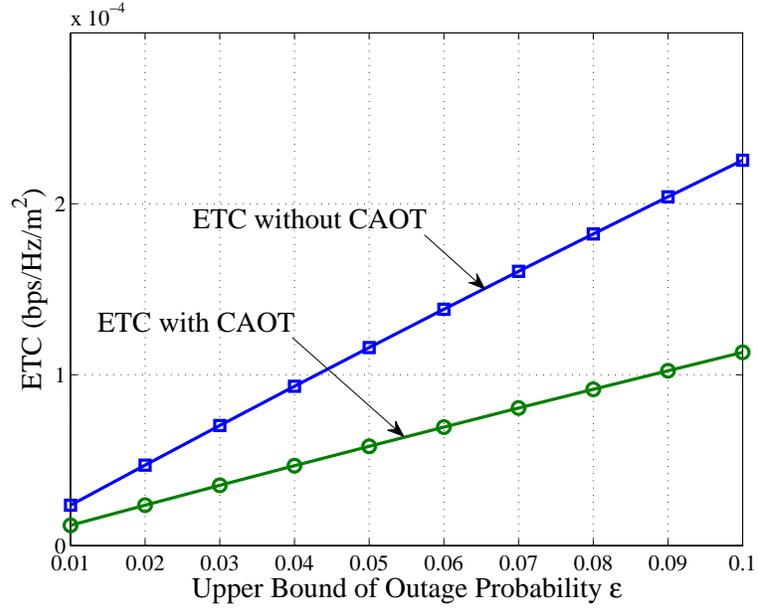}\\
  \caption{Numerical results for ETC with and without CAOT. The network parameters for simulation are : $d=10m$, $\beta=2$, $\alpha=3$, $\delta=1.5$, $s_1=0.5$, $s_2=2$, $\phi_1=0.8$, $\phi_2=0.2$ and $\lambda=0.01$.}\label{Fig:ETCwoIM}
\end{figure}

\begin{figure}[!h]
\centering
  \includegraphics[scale=0.6]{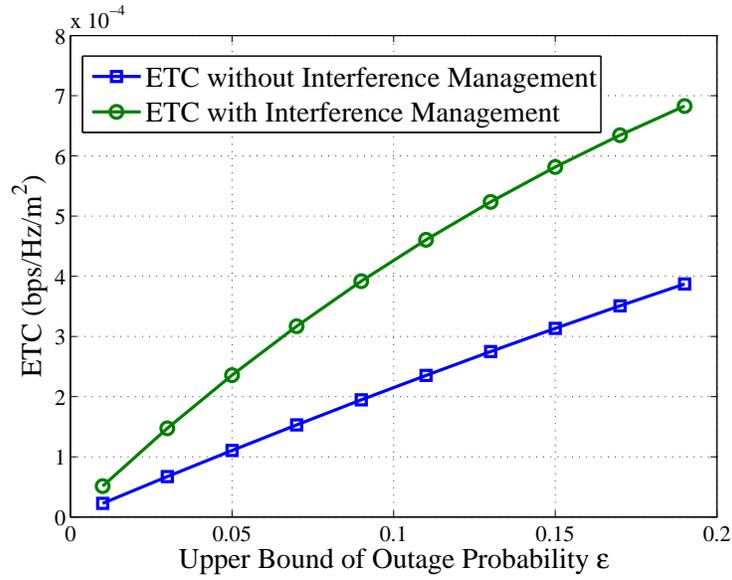}\\
  \caption{Numerical results for ETC with and without interference management. The network parameters for simulation are : $d=10m$, $\beta=2$, $\alpha=3$, $\delta=2$, $s_1=0.5$, $s_2=2$, $\phi_1=\phi_2=0.5$, $\gamma_1=\gamma_2=0.6$ and $\lambda=0.02$.}\label{Fig:ETCwIM}
\end{figure}

\begin{figure}[!h]
\centering
  \includegraphics[scale=0.7]{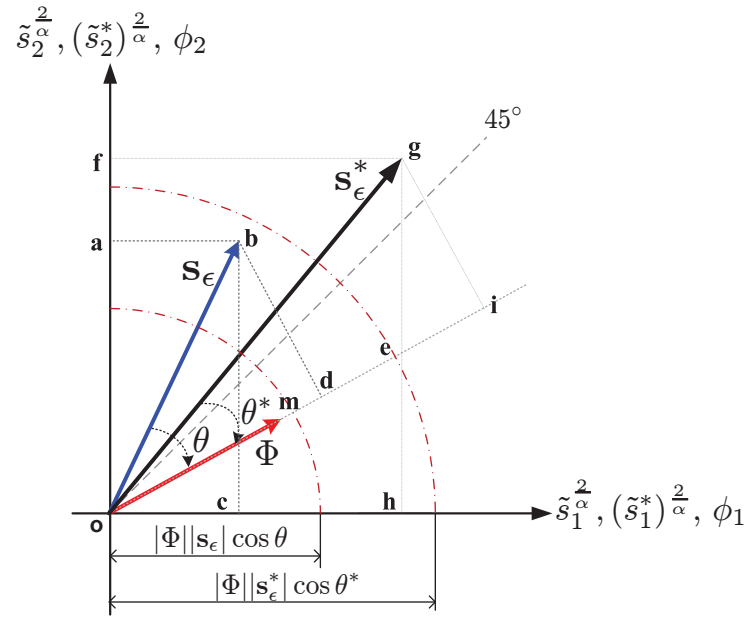}\\
  \caption{The geometric representation of the ETC for an FSMC with 2 states. $\mathbf{s}_{\epsilon}^*$ is the optimal vector that $\mathbf{s}_{\epsilon}$ can achieve by interference management. By using interference management, vertices \textbf{a} and \textbf{c} can be optimally moved to \textbf{f}  and \textbf{h}, respectively. The projection points of vertices \textbf{b} and \textbf{g} on $\Phi$ are \textbf{d} and \textbf{i}, respectively.}\label{Fig:ErgNetCapGeoIntIntMag}
\end{figure}

\end{document}